\pgfplotsset{compat=1.3}
\DeclarePairedDelimiter\norm{\lVert}{\rVert}%    \norm{v} or \norm*{v}
\DeclarePairedDelimiter\abs{\lvert}{\rvert}%     \abs{foo} or \abs*{foo}
\newcommand{\OPT}{\operatorname{\text{\textsc{opt}}}}
\newcommand{\LB}{\operatorname{\text{\textsc{lb}\emph{(I)}}}}
\renewcommand{\pcendfor}{\textbf{endfor}}
\renewcommand{\epsilon}{\varepsilon}
\acrodef{PTAS}[\textsc{ptas}]{polynomial time approximation scheme}
\acrodef{FPTAS}[\textsc{fptas}]{fully polynomial time approximation scheme}
\acrodef{EPTAS}[\textsc{eptas}]{efficient polynomial time approximation scheme}
\acrodef{IP}[\textsc{ip}]{integer program}
\acrodef{FFT}[\textsc{fft}]{fast Fourier transformation}
\acrodef{DFT}[\textsc{dft}]{discrete Fourier transformation}
\newcommand{\BDJR}{LRTP}% blind name (The [L]ong [R]oad from [T]heory to [P]ractice)
\title{Load Balancing: The Long Road from Theory to Practice} %TODO Please add
\author{Sebastian Berndt}{University of Lübeck, Lübeck, Germany}{s.berndt@uni-luebeck.de}{}{}
\author{Max A. Deppert}{Kiel University, Kiel, Germany}{made@informatik.uni-kiel.de}{}{Research supported by German Research Foundation (DFG) project JA 612/20-1}
\author{Klaus Jansen}{Kiel University, Kiel, Germany}{kj@informatik.uni-kiel.de}{}{Research supported by German Research Foundation (DFG) project JA 612/20-1}
\author{Lars Rohwedder}{EPFL, Lausanne, Switzerland}{lars.rohwedder@epfl.ch}{}{}
\authorrunning{S. Berndt, M.\,A. Deppert, K. Jansen, and L. Rohwedder} %TODO mandatory. First: Use abbreviated first/middle names. Second (only in severe cases): Use first author plus 'et al.'
\keywords{approximation scheme, makespan scheduling, parameterized algorithm, implementation} %TODO mandatory; please add comma-separated list of keywords
\begin{document}

\date{}

\maketitle

\begin{abstract} There is a long history of approximation schemes
  for the problem of scheduling jobs on identical machines to minimize the makespan.
  Such a scheme grants a $(1+\epsilon)$-approximation solution for every
  $\epsilon > 0$, but the running time grows exponentially in $1/\epsilon$.
  For a long time, these schemes seemed like a purely theoretical concept.
  Even solving instances for moderate values of $\epsilon$ seemed completely
  illusional.
  In an effort to bridge theory and practice, we refine recent ILP techniques to
  develop the fastest known approximation scheme for this problem.
  An implementation of this algorithm reaches values of $\epsilon$ lower than
  $2/11\approx 18.2\%$ within a reasonable timespan. This is the approximation
  guarantee of \textsf{MULTIFIT}, which, to the best of our knowledge, has the best
  proven guarantee of any non-scheme algorithm.
\end{abstract}
\vfill

\pagebreak

\section{Introduction}

Makespan minimization on identical parallel machines (often denoted by $P||C_{\max}$) asks for a distribution
of a set $J$ of $n=|J|$ jobs  to $m \leq n$ machines. 
Each job $j\in J$ has a processing time $p_j$ and the objective is to minimize the makespan, i.e., the maximum sum of processing times of jobs assigned to a single machine. 
More formally, a \emph{schedule} $\sigma\colon J\to \{1,\ldots,m\}$ assigns jobs to machines. 
The load $\ell_{\sigma,i}$ of machine $i$ in schedule $\sigma$ is defined as $\sum_{j\in \sigma^{-1}(i)} p_j$ and the \emph{makespan} $\mu(\sigma)=\max_{i} \{\ell_{\sigma,i}\}$ is the maximal load. The goal is to find a schedule $\sigma$ minimizing $\mu(\sigma)$.
This is a widely studied problem both in operations research and in combinatorial optimization and and has led
to many new algorithmic techniques.
For example, it has led to one of the earliest examples of an approximation scheme and
the use of the dual approximation technique~\cite{DBLP:journals/jacm/HochbaumS87}.
%The importance of this problem comes from the fact, that it has many important generalizations (such as non-identical machines, other job characteristics, or resource constraints) and often occurs as subproblem. 
The problem is known to be strongly NP-hard and thus we cannot expect to find an exact solution in polynomial time.
Many approximation algorithms that run in polynomial time and give a non-optimal solution have been proposed for this problem. 
From a theory point of view, the strongest approximation result is a \acf{PTAS}  which
gives a $(1 + \varepsilon)$-approximation, where the precision $\varepsilon > 0$
can be chosen arbitrarily small and is given to the algorithm as input.
This goes back to a seminal work by Hochbaum and Shmoys~\cite{DBLP:journals/jacm/HochbaumS87}.
The running time of such schemes for $P||C_{\max}$ were drastically improved over time~\cite{ DBLP:conf/soda/AlonAWY97,DBLP:journals/mor/JansenKV20,DBLP:journals/ipl/Leung89} and the best known running time is $2^{O(1/\varepsilon \log^{2}(1/\varepsilon))}\log(n) + O(n)$ due to Jansen and Rohwedder~\cite{DBLP:conf/innovations/JansenR19},
which is subsequently called the JR-algorithm. The JR-algorithm is in fact an algorithm for
integer programming, but gives this running time when applied to a natural formulation of $P||C_{\max}$.
A \ac{PTAS} with a running time of $f(1/\varepsilon)\cdot n^{O(1)}$ like in the JR-algorithm
is called an \acf{EPTAS}.
It follows from the strong NP-hardness that no \acf{FPTAS}, an approximation scheme polynomial in both $n$ and $1/\epsilon$,
exists unless $\mathrm{P} = \mathrm{NP}$.

\ac{PTAS}'s are often believed to be impractical.
They tend to yield extremely high (though polynomial) running time bounds
even for moderate precisions $\epsilon$, see Marx~\cite{DBLP:journals/cj/Marx08}.
By some, the research on \ac{PTAS}'s has even been considered damaging for the large gap between theory and practice
that it creates~\cite{rosamond2006parameterized}.
Although \ac{EPTAS}'s (when \ac{FPTAS}'s are not available) are sometimes proposed as a potential solution for this
situation~\cite{DBLP:journals/cj/Marx08}, we are not aware of a practical implementation of an \ac{EPTAS}.
For example, an approximation scheme for euclidean \textsc{tsp} was implemented by Rodeker et al.,
but the algorithm was merely inspired by an \ac{EPTAS} and it does not retain the theoretical guarantee~\cite{DBLP:conf/csc/RodekerCF09}.
Although this is an interesting research direction as well,
it remains an intriguing question whether one can obtain a practically relevant \ac{EPTAS} implementation with
actual theoretical guarantees.
On the one hand, we believe that this is an important question to ask concerning
the relevance of such a major field of research.
On the other hand, such a \ac{PTAS} implementation has great advantages in itself, since it
exhibits a clean and generic design that is not specific
to any concrete precision, as well as a (theoretically) unlimited potential of the precision.

\paragraph*{Our Results.}
As a major milestone we obtain a generic \ac{PTAS} implementation that achieves in reasonable time 
a precision which beats the best known guarantee of a polynomial time non-\ac{PTAS} algorithm.
This precision to the best of our knowledge is $2/11\approx 18.2\%$, which is guaranteed by the \textsf{MULTIFIT} algorithm. 
The claim might appear vague, since the running time depends not only on $\epsilon$, but also on the instance.
We believe that it is plausible nevertheless:
The algorithm we use, which is based on the JR-algorithm,
reduces the problem to performing $O(\log(n))$ many \acfp{FFT},
where the size of the \ac{FFT} input depends only on $\epsilon$ and not the instance itself.
Hence, the running time for all instances (using the same precision) is very stable and predictable.
This is in the spirit of an \ac{EPTAS} running time.
We successfully run experiments of our implementation for a precision of $\epsilon < 2/11$
and thus make the claim that this precision is practically feasible in general.
This is also the main message of our paper. 
For completeness, we provide comparisons of the solution quality obtained empirically. While the theoretical
guarantee of the \ac{PTAS} is better,
the difference to non-\ac{PTAS} algorithms
is marginal at this state
and it is not yet evident in the experiments.
The execution of the \ac{PTAS} is computationally expensive and the considered
precision is on the edge of what is realistic for our implementation. However, we believe that further
optimization or more computational resources can lead
to also empirically superior results.
Nevertheless, the successful execution with a low precision
value forms a proof of concept for practical \ac{PTAS}'s.

Towards obtaining such an implementation we need to fine-tune the JR-algorithm significantly. In particular, it requires
non-trivial theoretical work and novel algorithmic ideas.
In fact, our variant has a slightly better dependence on the precision, namely
$2^{O(1/\epsilon\log(1/\epsilon)\log\log(1/\epsilon))}$, giving the best known running time for this problem. Our approach also greatly reduces the constants
hidden by the $O$-notation.
We first construct an \acf{IP}~---~the well-known \emph{configuration \ac{IP}}~---~that implies a $(1+\epsilon)$-approximation by rounding the processing times.
This \ac{IP} has properties that allow sophisticated algorithms to solve it efficiently.
We present several reduction steps to simplify and compress the \ac{IP} massively.
As extensions of this configuration \ac{IP} are widely used, we believe this to
be of interest in itself. 
For the makespan minimization problem, we obtain an \ac{IP} where the columns of the constraint matrix have $\ell_{\infty}$-norms bounded by $2$ and
$\ell_{1}$-norms bounded by $O(\log(1/\varepsilon))$. 
In contrast, in the classical configuration integer program used in many of the previous \ac{PTAS}'s
both of these norms are bounded by $O(1/\varepsilon)$.
This allows us to greatly reduce the size of the \ac{FFT} instances in the JR-algorithm without losing the theoretical guarantee.
For example, for $\epsilon \approx 17{.}29\%$, our reduced \ac{IP} lowers the instance sizes for \ac{FFT} from $49^{12}$ words for the configuration \ac{IP} to $5^{12}$ words.

Another important aspect in the algorithm is the rounding of the processing times. In general,
one needs to consider only $O(1/\epsilon \log(1/\epsilon))$ different rounded processing times
(to guarantee a precision of $\epsilon$).
This number has great impact on the size of the \ac{FFT} instances.
For concrete $\epsilon$ the general rounding scheme might not give the optimal number of rounded processing times.
We present a mixed integer linear program that can be used to generically optimize the rounding scheme for guaranteeing
a fixed precision $\epsilon$ (or equivalently, for a fixed number of rounded processing times).

\paragraph*{Related Work.}
The running time $f(1/\varepsilon)\cdot n^{O(1)}$ is a fixed-parameter running time, if $1/\varepsilon$ is treated as a parameter. 
In recent years, the study of practically usable parameterized algorithms has been a growing field of research.
This need for practically usable parameterized algorithms has led to the Parameterized Algorithms and Computational Experiments (PACE) challenge~\cite{DBLP:conf/iwpec/BonnetS18, DBLP:conf/iwpec/DellKTW17,  DBLP:conf/iwpec/DzulfikarFH19}. 
This challenge has brought up surprisingly fast algorithms for important problems such as \emph{treewidth}.
Note that the fastest known such algorithm due to Tamaki~\cite{DBLP:journals/jco/Tamaki19} is based on an algorithm by Bouchitt{\'{e}} and Todinca~\cite{DBLP:journals/tcs/BouchitteT02}, which was widely believed to be purely theoretic.
Our work can thus be viewed as an extension of these works to the field of approximation algorithms.

Many approximation algorithms for $P||C_{\max}$ were developed over time. 
The first such algorithm was the longest processing time first (\textsf{LPT}) algorithm by Graham, that achieved approximation ratio $4/3$~\cite{DBLP:journals/siamam/Graham69}. 
In~\cite{DBLP:journals/siamcomp/CoffmanGJ78}, Coffman et al.~presented the \textsf{MULTIFIT} algorithm that achieved a better  approximation ratio of $13/11$. 
It was later shown by Yue that this analysis is tight, i.\,e.~there are instances where \textsf{MULTIFIT} generates a solution with value $13/11 \cdot \OPT$~\cite{Yue1990OnTE}.
Kuruvilla and Palette combined the \textsf{LPT} algorithm and the \textsf{MULTIFIT} algorithm in an iterative way to obtain the Different Job and Machine Sets (\textsf{DJMS}) algorithm~\cite{DBLP:journals/ijoris/KuruvillaP15}. 

% Derzeit kann ich den Absatz nicht gut einbinden. - Lars
%An important technique for approximation schemes is the ability that $P||C_{\max}$ can be formulated as integer program. 
%This was first proposed by Wagner~\cite{Wagner1959}.
%As this integer program has relatively few constraints (depending on the number $d$ of different processing times), specialized algorithms for this scenario are often used.
%To test whether an integer program of the form $Ax=b$ has a solution $x\in \mathbb{Z}_{\geq 0}^{n}$, where $A$ has $d$ rows and the largest absolute value of $A$ is $\Delta := \lVert A \rVert_{\infty}$, several algorithms were developed. 
%The first such algorithm was given by Papadimitriou~\cite{DBLP:journals/jacm/Papadimitriou81} and obtained a running time of $O(n^{2d+2}\cdot (d\cdot \max\{\Delta, \lVert b \rVert_{\infty}\})^{(d+1)(2d+1)})$.   
%This was recently improved by Eisenbrand and Weismantel to a running time of $n\cdot O(d\Delta)^{d}\cdot \lVert b\rVert_{1}$~\cite{DBLP:journals/talg/EisenbrandW20} and again by Jansen and Rohwedder to $O(\sqrt{d}\Delta)^{m}\cdot \log(\Delta)+\log(\Delta+\lVert b\rVert_{1})+O(n\cdot d)$~\cite{DBLP:conf/innovations/JansenR19}.

\section{Algorithm}
The general idea of our algorithm follows a typical approach for approximation
schemes.
We follow the dual approximation technique by performing a binary search on the optimal makespan.
Here it suffices to construct an algorithm that for a given value $T$ either finds a schedule
of makespan at most $(1 + \epsilon) T$ or determines that $T$ is smaller than $\OPT$. 
Then we simplify the instance such that there are no jobs of very small processing time ($\leq \varepsilon T$)
and jobs of very large processing time ($\geq (1-2\varepsilon)T$). The former is standard, whereas
the latter reduction step is novel. This already reduces the range of processing times
significantly for moderate values of $\epsilon$.
The remaining jobs are rounded via a novel rounding to $O(1/\varepsilon \log(1/\varepsilon))$
different processing times.
We can then formulate the problems as an integer program and solve it via the
algorithm of Jansen and Rohwedder~\cite{DBLP:conf/innovations/JansenR19}.
Interestingly, our new rounding scheme allows us to compress the
well-known configuration integer program quite significantly to obtain a better
running time. 

We defer some of the proofs in this section to the appendix, since they require some lengthy, but
straight-forward calculations. We write $\log = \log_2$ to denote the logarithm to base $2$.
\subsection{Rounding scheme}
It is well known that all jobs $j$ with $p_{j}\leq \varepsilon T$ can be discarded and added
greedily after solving the remaining instance. Let
$J_{\text{small}}=\{j\in J\mid p_{j}\leq \varepsilon T\}$ and
$J_{\text{large}}=J\setminus J_{\text{small}}$. 
The
feasibility of this approach follows from the following lemma:
\begin{restatable}{lemma}{LargeAndSmallJobs}
  Let $\delta_{\text{large}}$ be a schedule of $J_{\text{large}}$ with makespan $\mu(\delta_{\text{large}})$.
  Adding the jobs from $J_{\text{small}}$ greedily gives a schedule $\delta$
  with makespan
  \begin{align*}
    \mu(\delta)\leq \max\{\mu(\delta_{\text{large}}), \OPT(J)+\epsilon T\}.
  \end{align*}
  The procedure can be implemented to run in time $\mathcal{O}((|J_{\text{small}}|+m)\cdot
  \log(|J_{\text{small}}|+m))$. 
\end{restatable}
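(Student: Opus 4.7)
The plan is to analyze a classical list-scheduling-style greedy: assign the small jobs one after another, each time placing the current job onto the machine with the smallest current load, and maintain the loads in a min-heap to get the running time bound.

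For the makespan guarantee, I would use the following case distinction. If the final makespan $\mu(\delta)$ is at most $\mu(\delta_{\text{large}})$, the first term in the maximum dominates and we are done. Otherwise, let $i^\ast$ be a machine attaining $\mu(\delta)$ in the final schedule. Since $\mu(\delta) > \mu(\delta_{\text{large}})$, at least one small job has been added to $i^\ast$ after the large jobs; let $j^\ast$ be the last such job placed on $i^\ast$, and write $L$ for the load of $i^\ast$ immediately before $j^\ast$ was added. By the greedy rule, $i^\ast$ had the minimum load among all machines at that moment, so $L$ is at most the average load of the then-scheduled jobs, which is at most $(1/m)\sum_{j\in J} p_j \leq \OPT(J)$. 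Since $j^\ast$ is small, $p_{j^\ast} \leq \epsilon T$, and because no further job is placed on $i^\ast$ after $j^\ast$, its final load is $L + p_{j^\ast} \leq \OPT(J) + \epsilon T$. This yields $\mu(\delta) \leq \OPT(J) + \epsilon T$, closing the second case.

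For the running time, I would maintain the $m$ machine loads in a min-heap initialized from $\delta_{\text{large}}$ in time $O(m)$. Processing a small job amounts to one extract-min followed by one insertion with the updated load, each costing $O(\log m)$. Over all $|J_{\text{small}}|$ small jobs this gives $O((|J_{\text{small}}|+m)\log m)$, which is within the stated bound since $m \leq |J_{\text{small}}|+m$.

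I do not expect a serious obstacle here: the argument is essentially Graham's list-scheduling analysis specialized to the situation where all already-placed jobs are large and the newly added ones are small. The only subtlety is the choice of $j^\ast$ as the \emph{last} small job added to a machine achieving the final maximum, which is what allows one to upper-bound the final load of $i^\ast$ exactly by $L + p_{j^\ast}$ rather than by some later, potentially larger quantity.
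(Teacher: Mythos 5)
Your proof is correct and follows essentially the same route as the paper's: both are the standard Graham list-scheduling analysis, bounding the load of the critical machine by the average load (at most $\OPT(J)$) plus the size of one small job (at most $\epsilon T$), and both use a min-heap for the running time. The paper phrases the load bound globally (all final loads lie within $\epsilon T$ of the maximum, then sums), while you trace the last small job placed on the critical machine; these are interchangeable formulations of the same argument.
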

Furthermore, we can also get rid of huge jobs $J_{\text{huge}}$ with processing times at least
$(1-2\varepsilon)T$, as each such job can only be paired with at most one other job
from $J_{\text{large}}$ without violating the guess $T$.
It is easy to see that  we can pair a huge job with the largest possible
large job without losing optimality.
\begin{restatable}[informal]{lemma}{HugeJobs}
  \label{lem:huge_jobs}
  There is an optimal schedule $\delta_{\text{large}}$ of $J_{\text{large}}$
  where each huge job is paired with the largest possible large job (or not
  paired at all). 
\end{restatable}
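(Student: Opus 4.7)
The plan is an exchange argument driven by an induction on the huge jobs processed in decreasing order of size. First I would formalize the observation already noted in the text: in any feasible schedule of makespan at most $T$ (in particular of makespan $\OPT$), a huge job of size at least $(1-2\varepsilon)T$ shares its machine with at most one other large job, because two large jobs of size more than $\varepsilon T$ together with the huge job exceed $T$. Hence every huge job is either alone or paired with exactly one large job.

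Next I would define the target canonical pairing greedily. Sort the huge jobs $h_1, h_2, \ldots$ in non-increasing order of size and, for each $h_k$ in turn, let $L_k$ be the largest large job in $J_{\text{large}} \setminus \{L_1, \ldots, L_{k-1}\}$ with $p_{h_k} + p_{L_k} \leq \OPT$, or $L_k = \bot$ if no such job exists. The claim to prove is that there is an optimal schedule in which each $h_k$ is paired with $L_k$ (or left alone when $L_k = \bot$). I would prove this by induction on $k$: starting from an optimal schedule in which $h_1, \ldots, h_{k-1}$ are already paired canonically, I would modify it locally so that $h_k$ also becomes canonically paired, without disturbing the previously fixed pairings and without raising the makespan.

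For the inductive step, let $j$ (possibly $\bot$) denote $h_k$'s current partner. Since $j \notin \{L_1, \ldots, L_{k-1}\}$, if $j \neq \bot$ then $j$ is itself a valid candidate for $L_k$, which gives $p_{L_k} \geq p_j$; in particular, if $L_k = \bot$ then $j = \bot$ and $h_k$ is already canonical. Otherwise, $L_k$ is either on a machine with no huge job or paired with some huge job $h'$. In the first case I would swap $L_k$ onto $h_k$'s machine and push $j$ back: the new load on $h_k$'s machine is $p_{h_k} + p_{L_k} \leq \OPT$ by the definition of $L_k$, while the other machine's load weakly decreases because $p_j \leq p_{L_k}$. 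In the second case, $L_k \notin \{L_1, \ldots, L_{k-1}\}$ forces $h' = h_i$ for some $i > k$, hence $p_{h'} \leq p_{h_k}$; swapping the partners of $h_k$ and $h'$ yields loads $p_{h_k} + p_{L_k} \leq \OPT$ and $p_{h'} + p_j \leq p_{h_k} + p_j \leq \OPT$, the latter using the original feasibility of $(h_k, j)$. The main obstacle is ensuring that such an exchange never disturbs the already-fixed pairings $h_i \mapsto L_i$ for $i < k$, and this is precisely why $L_k$ is chosen outside $\{L_1, \ldots, L_{k-1}\}$: every swap then happens strictly within the ``not-yet-fixed'' part of the schedule, so the induction carries through to completion.
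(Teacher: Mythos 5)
Your proof is correct and takes essentially the same approach as the paper's: the identical greedy canonical pairing (largest not-yet-assigned fitting partner, with huge jobs processed in non-increasing order of size) followed by an exchange argument that swaps the canonical partner onto the huge job's machine, using $p_{L_k}\geq p_j$ and $p_{h'}\leq p_{h_k}$ to show no load increases. Your case analysis is in fact slightly more detailed than the paper's, which performs the same swap without distinguishing whether the canonical partner currently shares a machine with another huge job.
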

As we now know how to place all of the jobs in $J_{\text{huge}}$ optimally, we
can ignore them and their paired jobs in the following.
After removing all of these jobs, we are left with the remaining jobs
$J_{\text{rem}}$ that we still need to schedule. 
For all $j\in J_{\text{rem}}$, we now know that we have $p_{j}\in(\varepsilon T,(1-2\varepsilon)T)$.
We will now round these remaining item sizes in order to reduce the number of
different processing times in our instance.
In order to do this, we first split the interval $(\varepsilon T,(1-2\varepsilon)T)$ into
$\log(1/\varepsilon)$ growing intervals of size $2^{i}\varepsilon T$ (starting with $i=0$).
Each of these intervals is then split into $1/\varepsilon$ smaller intervals of
the same size.

For example, for $\varepsilon=1/6$ and $T=1$, the growing intervals $(1/6,1/3]$
and $(1/3,2/3]$ are split into smaller intervals with the following boundaries.
\begin{align*}
  \frac16, \ \frac16+\frac1{36}, \ \frac16+\frac2{36}, \ \frac16+\frac3{36}, \ \frac16+\frac4{36}, \  \frac16+\frac5{36}, \\
  \frac13, \ \frac13+\frac1{18}, \ \frac13+\frac2{18}, \ \frac13+\frac3{18}, \ \frac13+\frac4{18}, \ \frac13+\frac5{18}.
\end{align*}
More formally, for $i\in \mathbb{Z}_{\geq 0}$, let $I_{i}=(2^{i}\varepsilon
T,2^{i+1}\varepsilon T]$.
In the example above, we thus have $I_{0}=(1/6,1/3]$ and $I_{1}=(1/3,2/3]$. 
We further partition an intervall $I_{i}$ into $\lceil 1/\varepsilon \rceil$ subintervals
$I_{i,k}=(b_{i,k},b_{i,k+1}]\cap I_{i}$ with $b_{i,k}=2^{i}\varepsilon
T+ k\epsilon^{2} 2^{i}T$ for $k\in \{0,\ldots, \lceil 1/\varepsilon  -1\rceil \}$.
Hence, the above exemplary boundaries are exactly the values $b_{i,k}$ for $i\in
\{0,1\}$ and $k\in \{0,\ldots,5\}$. 
The processing time of any remaining job $j\in J_{\text{rem}}$ is rounded down to the next lower
boundary.
We denote this rounded processing time of $j$ by $\tilde{p}_{j}$. 
\begin{restatable}[informal]{lemma}{rounding}
  \label{lem:rounding}    
  There are $O(1/\varepsilon \log(1/\varepsilon))$ rounded processing times
  $\tilde{p}_{j}$ and a schedule $\tilde{\sigma}$ of the rounded processing times implies a
  schedule $\sigma$ of the original processing time with $\mu(\sigma)\leq
  (1+\varepsilon)\mu(\tilde{\sigma})$.
  Furthermore, the sum of two boundaries $b_{i,k}$ and $b_{i,k'}$, where $k$ and
  $k'$ have the same parity, is equal to some boundary $b_{i+1,k''}$. 
\end{restatable}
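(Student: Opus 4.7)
The plan is to attack the three claims independently, since they are essentially disjoint consequences of the explicit formula $b_{i,k}=2^{i}\varepsilon T+k\varepsilon^{2}2^{i}T=2^{i}\varepsilon T(1+k\varepsilon)$ for the interval boundaries.

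For the counting claim, I would observe that an interval $I_{i}$ can only contain a remaining processing time if $2^{i}\varepsilon T<(1-2\varepsilon)T$, i.e., if $i<\log((1-2\varepsilon)/\varepsilon)=O(\log(1/\varepsilon))$. Each such $i$ contributes at most $\lceil 1/\varepsilon\rceil$ subintervals $I_{i,k}$, so the total number of rounded values is $O(1/\varepsilon\log(1/\varepsilon))$.

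For the approximation claim, I would fix $j\in J_{\text{rem}}$ with $p_{j}\in I_{i,k}$, so $\tilde{p}_{j}=b_{i,k}$ and $p_{j}\le b_{i,k+1}$. The rounding error is $p_{j}-\tilde{p}_{j}\le b_{i,k+1}-b_{i,k}=\varepsilon^{2}2^{i}T$, while on the other hand $\tilde{p}_{j}\ge b_{i,0}=2^{i}\varepsilon T$. Combining these yields $p_{j}\le(1+\varepsilon)\tilde{p}_{j}$. Setting $\sigma:=\tilde{\sigma}$ and summing over jobs on each machine gives $\ell_{\sigma,i}\le(1+\varepsilon)\ell_{\tilde{\sigma},i}$, hence $\mu(\sigma)\le(1+\varepsilon)\mu(\tilde{\sigma})$.

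For the boundary sum identity I would simply compute, using the factored form of $b_{i,k}$, that $b_{i,k}+b_{i,k'}=2^{i+1}\varepsilon T+(k+k')\varepsilon^{2}2^{i}T$, and compare with $b_{i+1,k''}=2^{i+1}\varepsilon T+k''\varepsilon^{2}2^{i+1}T$. These agree iff $k''=(k+k')/2$, which is an integer exactly when $k$ and $k'$ share the same parity; this also verifies $k''\in\{0,\dots,\lceil 1/\varepsilon\rceil -1\}$ in the relevant range.

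None of the three steps is hard in isolation; the main delicacy is book-keeping the constants in the approximation estimate so that the geometric scaling of $I_i$ exactly cancels the $\varepsilon^{2}$ width of $I_{i,k}$ to yield a clean $(1+\varepsilon)$ factor, and making sure the case of very small $i$ (where $\tilde p_j$ is close to $\varepsilon T$) is still covered by $\tilde p_j \ge 2^i\varepsilon T$.
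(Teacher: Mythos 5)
Your proposal is correct and follows essentially the same route as the paper's proof: counting the $O(\log(1/\varepsilon))$ relevant intervals $I_i$ times the $\lceil 1/\varepsilon\rceil$ subintervals each, bounding the per-job rounding loss by the subinterval width $\varepsilon^2 2^i T \le \varepsilon\,\tilde p_j$ (the paper phrases this as $(1+\varepsilon)b_{i,k}\ge b_{i,k+1}$, which is the same calculation), and verifying the boundary identity $b_{i,k}+b_{i,k'}=b_{i+1,(k+k')/2}$ by direct expansion using the parity condition. No gaps worth noting.
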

The last property of the lemma that every two boundaries $b_{i,k}$ and
$b_{i,k'}$ of the same interval (with the same parity of $k$ and $k'$) sum up to
a boundary $b_{i+1,k''}$ in the next interval will be heavily used next.
Intuitively, this property implies that whenever a job with rounded processing
time $b_{i,k}$ and another job with rounded processing time $b_{i,k'}$ are
scheduled on the same machine, we can treat them as a single job with rounded
processing time $b_{i+1,k''}$.
This allows us to characterize the possible ways to assign rounded jobs to
machines in a more compact ways, which in turn allows us to solve the
corresponding integer program much faster.

\subsection{A new integer program}
\label{sec:ilp}
Integer programs are widely used to design approximation algorithm and
approximation schemes.
The classical result of Lenstra and Kannan~\cite{DBLP:journals/mor/Lenstra83,DBLP:journals/mor/Kannan87}  shows that an integer
program with $n$ variables can be solved in time $n^{O(n)}\cdot |I|^{O(1)}$,
where $|I|$ is the encoding length of the integer program (i.\,e.~the binary
encoding of all numbers in the objective function, the right-hand side, and the
constraint matrix).
This result was heavily used in the past to design approximation schemes.
In fact, using Lemma~\ref{lem:rounding} with this algorithm already yields an
algorithm with running time double exponential in $1/\varepsilon$.
In the past years, other parameters besides the number of variables were
studied, including the number of constraints, the largest entry in the
constraint matrix or the properties of the graph corresponding to the
constraints~(see e.\,g.~\cite{DBLP:journals/talg/EisenbrandW20,DBLP:conf/icalp/JansenLR19,DBLP:conf/icalp/KouteckyLO18}).
We will make use of the recent results that use the number of constraints
(i.\,e.~the number of rows of the constraint matrix) \emph{and} the largest
entry of the constraint matrix.

A basic concept of many algorithms for $P||C_{\max}$ is the \emph{configuration
  integer program}, which we will also use.
Roughly speaking, for each possible way $c$ to put jobs on a machine (called a
configuration), this integer program has a variable $x_{c}$ indicating
how often this configuration is used.
Then the integer program expresses that all jobs should be scheduled and that
the number of configurations used should not exceed $m$ via suitable
constraints. 

More formally, the integer program is constructed in the following way. Let
$d$ denote the number of rounded item sizes. 
We will index a vector $x\in \mathbb{Z}^{d}$ by pairs $(i,k)$, corresponding to
the values used in the rounded item sizes $b_{i,k}$ and denote its
corresponding entry by $x[i,k]$. 
A vector $c\in \mathbb{Z}_{\geq 0}^{d}$ thus describes a possible way to
schedule jobs on a machine, where the value $c[i,k]$ describes how many jobs with
rounded processing time $b_{i,k}$ are put on a machine.
We call such a vector $c$ a \emph{configuration}, if the resulting load of the
machine does not exceed~$T$, i.\,e.~$\sum_{i,k}c[i,k]\cdot b_{i,k}\leq T$.
Let $\mathcal{C}$ be the set of all configurations. 
For each $c\in \mathcal{C}$, we have a variable $x_{c}$ that describes how often
configuration $c$ is used, i.\,e.~$x_{c}$ machines are scheduled according to
$c$.
As we only have $m$ machines available, we are only allowed to use at most~$m$
configurations.
Hence $\sum_{c\in \mathcal{C}}x_{c}\leq m$.
To guarantee that all jobs are scheduled, let $n_{i,k}$ be the number
of items with rounded processing time $b_{i,k}$.
Now, summing over all chosen configurations, we want that they contain at least
$n_{i,k}$ jobs of rounded processing time $b_{i,k}$.
Hence, $\sum_{c\in \mathcal{C}}x_{c}\cdot c[i,k]\geq n_{i,k}$. 
Combining these with the natural requirement that~$x_{c}\in \mathbb{Z}_{\geq
  0}$, we obtain the following integer program 
called the \emph{configuration \ac{IP}}: 
\begin{align*}
  \sum_{c\in \mathcal{C}}x_{c} &\leq m \\
  \sum_{c\in \mathcal{C}}x_{c}\cdot c[i,k] &\geq n_{i,k} \ \  \forall (i,k) \tag{confIP} \\
  x_{c} &\in \mathbb{Z}_{\geq 0} \ \  \forall c\in \mathcal{C}
\end{align*}

As described above, the important parameters in the algorithm that we want to
use are the number of rows of the constraint matrix ($d$ in our case) and the
largest entry in the constraint matrix ($\max_{c\in \mathcal{C},
  (i,k)}\{c[i,k]\}$). 
Now, the first property of Lemma~\ref{lem:rounding} already shows that the number of
rows of the configuration \ac{IP} is bounded, i.\,e.~$d\leq O(1/\varepsilon\cdot
\log(1/\varepsilon))$.
As every boundary $b_{i,k}$ is at least $\varepsilon T$ and we aim for a maximal
load of $T$, we can easily see that the largest entry of a configuration and thus of the constraint matrix
is at most $1/\varepsilon$.
A closer look reveals that we actually have the slightly stronger bound of
$\lVert c \rVert_{1}\leq 1/\varepsilon$ for all $c\in \mathcal{C}$.
Without jumping too far ahead, the algorithm of Jansen and
Rohwedder~\cite{DBLP:conf/innovations/JansenR19} discussed in
Section~\ref{sec:rohwedder} will thus yield a running time
$2^{\mathcal{O}(1/\varepsilon\log^2(1/\varepsilon))}+O(n)$, which is slightly too high to
be usable in practice for our desired approximation guarantee of $\varepsilon <
2/11$. 
To decrease this running time, we will make use of the last property of
Lemma~\ref{lem:rounding}, which will give an improved bound of $\lVert c \rVert_{1}\leq
O(\log(1/\varepsilon))$ and thus improve the running time to
$2^{\mathcal{O}(1/\varepsilon\log(1/\varepsilon)
  \log(\log(1/\varepsilon)))}+O(n)$.
  Moreover, the hidden constants are significantly lower.
  This is a sufficient improvement for the
algorithm to run in reasonable time for $\varepsilon < 2/11$. 

To improve the bound on $\lVert c \rVert_{1}$, we will \emph{add} new columns
$\hat{\mathcal{C}}$ to the configuration \ac{IP}.
Remember that Lemma~\ref{lem:rounding} states that all for all boundaries
$b_{i,k}$ and $b_{i,k'}$ with $k\bmod 2 = k'\bmod 2$, there is
$b_{i+1,k''}=b_{i,k}+b_{i,k'}$. 
The main idea behind these new columns $\hat{\mathcal{C}}$ is that whenever we
use a job with processing time $b_{i,k}$ and a job with processing time
$b_{i,k'}$ on the same machine, we can treat this as a \emph{single job} with processing
time $b_{i+1,k''}$.
Each new column $\hat{c}(i,k,k')$ will do this exact replacement.
The final observation that we need is that in all configurations $c\in
\mathcal{C}$ with $\lVert c \rVert_{1} > 2\log(1/\varepsilon)$, we can do such a
replacement:
There are only $\log(1/\varepsilon)$ growing large intervals $i$ in our
rounding and in each interval we can choose at most two boundaries of different
parity $k\bmod 2\neq k'\bmod 2$. 
Hence, if $\lVert c \rVert_{1} > 2\log(1/\varepsilon)$, configuration $c$ uses
two jobs with processing times $b_{i,k}$ and $b_{i,k'}$ with $k\bmod 2 = k'\bmod
2$ and we can thus reduce this configuration via $\hat{c}(i,k,k')$.

By adding the columns $\hat{\mathcal{C}}$ to the configuration integer
program, we can remove all configurations $c$ except those in $\mathcal{C}_{\textrm{red}}=\{c\in \mathcal{C} \colon \vert\vert
c\vert\vert_{1}\leq 2\log(1/\epsilon)\}$.
Let us denote this \ac{IP} by
$\mathsf{IP}_{\mathcal{C}_{\textrm{red}},\hat{\mathcal{C}}}$.
Our discussion above thus implies the following lemma.

\begin{restatable}[informal]{lemma}{reducedConfiguration}
  \label{lem:reduced}
  For all solutions to the integer program
  $\mathsf{IP}_{\mathcal{C}_{\textrm{red}},\hat{\mathcal{C}}}$, we can compute
  in linear time a solution to (confIP) and vice versa. 
\end{restatable}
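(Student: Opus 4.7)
My plan is to prove Lemma \ref{lem:reduced} by explicit bidirectional translation of solutions. I first fix the interpretation of the new columns: each $\hat{c}(i,k,k')$ records the accounting step of fusing one job of rounded size $b_{i,k}$ and one of rounded size $b_{i,k'}$, with $k \equiv k' \pmod 2$, into a single job of rounded size $b_{i+1,k''}$, where $b_{i+1,k''} = b_{i,k} + b_{i,k'}$ by Lemma \ref{lem:rounding}. As a column this carries $+1$ in rows $(i,k)$ and $(i,k')$, $-1$ in row $(i+1,k'')$, and $0$ in the machine-count row; using it contributes one unit of coverage at $(i,k)$ and $(i,k')$ while generating one extra unit of demand at $(i+1,k'')$, and never consumes a machine.

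For the direction $(\textrm{confIP}) \to \mathsf{IP}_{\mathcal{C}_{\textrm{red}},\hat{\mathcal{C}}}$, I would start with a feasible $(x_c)_{c \in \mathcal{C}}$ and iteratively reduce any configuration $c$ with $x_c > 0$ and $\|c\|_1 > 2\log(1/\epsilon)$. Since there are only $\log(1/\epsilon)$ growing intervals and two parities of $k$ per interval, the pigeonhole argument laid out in the paragraph preceding the lemma yields indices $(i,k)$ and $(i,k')$ with $c[i,k] \geq 1$, $c[i,k'] \geq 1$ (possibly $k = k'$ with multiplicity $\geq 2$), and $k \equiv k' \pmod 2$. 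Setting $c' = c - e_{(i,k)} - e_{(i,k')} + e_{(i+1,k'')}$, the identity $b_{i,k}+b_{i,k'}=b_{i+1,k''}$ preserves the load, so $c' \in \mathcal{C}$ and $\|c'\|_1 = \|c\|_1 - 1$. Decrementing $x_c$, incrementing $x_{c'}$, and incrementing $y_{\hat{c}(i,k,k')}$ each by one leaves every IP row invariant, and the process eventually pushes every configuration into $\mathcal{C}_{\textrm{red}}$.

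For the reverse direction I would simply invert each merge. While some $y_{\hat{c}(i,k,k')} > 0$, the coverage constraint at $(i+1,k'')$ forces $\sum_c x_c c[i+1,k''] \geq n_{i+1,k''} + y_{\hat{c}(i,k,k')} \geq 1$, so some configuration $c$ in the support has $c[i+1,k''] \geq 1$. Replace $c$ by $c'' = c + e_{(i,k)} + e_{(i,k')} - e_{(i+1,k'')}$, still a valid configuration by the same load identity, decrement $y_{\hat{c}(i,k,k')}$, and update $x_c, x_{c''}$ accordingly; every row remains satisfied. Iterating drives all merge multiplicities to zero and produces a feasible confIP solution.

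Each atomic step modifies $O(1)$ entries and strictly decreases a nonnegative integer potential, namely $\sum_{c \notin \mathcal{C}_{\textrm{red}}} x_c \|c\|_1$ in one direction and $\sum_{\hat{c}} y_{\hat{c}}$ in the other, so the total work is linear in the size of the support of the input solution. The main obstacle I anticipate is confirming the pigeonhole step in the degenerate case $k = k'$ (where one relies on the multiplicity rather than two distinct indices) and verifying that the derived index $k''$ always lies in the admissible range of boundaries rather than falling outside the rounding scheme; both are subsumed by the precise definition of $b_{i,k}$ together with the final clause of Lemma \ref{lem:rounding}.
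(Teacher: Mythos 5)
Your overall strategy is the same as the paper's. The paper proves the lemma via a switching lemma whose two atomic moves are exactly yours: in the forward direction, pigeonhole inside a too-long configuration $c_1$, replace one use of $c_1$ by one use of $c_2=c_1-\hat{c}(i^*,k_1^*,k_2^*)$ and one unit of the merge column; in the backward direction, undo one unit of a merge column against a real configuration covering the row $(i^*+1,(k_1^*+k_2^*)/2)$. Your forward direction and the potential-function accounting match this.

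There is, however, a genuine gap in your reverse direction, precisely in the claim that whenever $y_{\hat{c}(i,k,k')}>0$ the row constraint at $(i+1,k'')$ forces $\sum_{c\in\mathcal{C}_{\textrm{red}}} x_c\, c[i+1,k''] \ge n_{i+1,k''}+y_{\hat{c}(i,k,k')}\ge 1$. This ignores that other columns of $\hat{\mathcal{C}}$ also act on row $(i+1,k'')$ with \emph{positive} sign, namely every $\hat{c}(i+1,k'',l)$. Hence the constraint only guarantees that the demand created at $(i+1,k'')$ is covered either by real configurations or by level-$(i+1)$ merge columns, and the latter can actually happen: with $n_{i+1,k''}=0$, a feasible solution may have $y_{\hat{c}(i,k,k')}=1$, $y_{\hat{c}(i+1,k'',l)}=1$ and a single real configuration whose only nonzero entry is in row $(i+2,(k''+l)/2)$; then no configuration in the support has $c[i+1,k'']\ge 1$, and your step cannot be executed on $\hat{c}(i,k,k')$, so the loop as stated can get stuck on the chosen column. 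The paper sidesteps this by phrasing its backward switching step conditionally on the existence of such a covering configuration $c_2$. To make your unconditional loop correct you need an elimination order, e.g.\ always undo a merge column with the largest interval index $i$ among those with positive value: at that level no column of $\hat{\mathcal{C}}$ contributes positively to row $(i+1,k'')$, so your inequality becomes valid and a covering real configuration exists; the replacement $c''=c+\hat{c}$ and the potential $\sum_{\hat{c}} y_{\hat{c}}$ then work exactly as you describe. With this amendment (and a short argument, which you only gesture at, that the target row $(i+1,k'')$ indeed exists in the grid of boundaries whenever the pigeonhole pair fits into one configuration), your proof coincides with the paper's.
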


The advantage the system 
$\mathsf{IP}_{\mathcal{C}_{\textrm{red}},\hat{\mathcal{C}}}$ gives us is that
all columns have an $\ell_{1}$-norm bounded by $O(\log(1/\epsilon))$. 
The running time of the JR-algorithm directly depends on the discrepancy of the underlying constraint matrix.
This improved bound on the $\ell_{1}$-norm then allows us to bound this discrepancy leading to a faster running time (see Sec.~\ref{sec:rohwedder} for a more thorough discussion).
Furthermore, the $\ell_{\infty}$-norm of each column is at most $2$ (due to the columns in~$\hat{\mathcal{C}}$). 
Already for relatively large values of $\epsilon$, this reduces the number of columns significantly. 
For example, for $\epsilon=1/6$, the number of columns is reduced from $409$ down to $213$.

\subsection{Applying the JR-algorithm}
\label{sec:rohwedder}
Jansen and Rohwedder~\cite{DBLP:conf/innovations/JansenR19} described an algorithm for
integer programming and applied it to the configuration \ac{IP} for $P\abs{}C_{\max}$.
This algorithm reduces the task of solving the integer program to a small number
of \acfp{FFT}.
The size of the \ac{FFT} input depends on the number of rows of the constraint matrix as well as
its discrepancy. Using the properties of our new integer program
we are able to derive much better bounds on the discrepancy of the constraint matrix.
Intuitively, the discrepancy of a matrix $A$ measures how well the value $A\cdot
(1/2,1/2,\ldots,1/2)^{T}$ can be approximated by the term $Az^{T}$, where $z$ is
some binary vector. 

\begin{definition}[Discrepancy]
For a matrix $A \in \mathbb{R}^{m\times n}$ the \emph{discrepancy of $A$} is given as
\[\operatorname{disc}(A) = \min_{z \in \{0,1\}^n} \norm*{A\left(z-\left(\frac12,\dots,\frac12\right)^T\right)}_{\infty}.\]
Moreover, the \emph{hereditary discrepancy} of $A$ is then defined as
\[\operatorname{herdisc}(A) = \max_{I\subseteq \{1,\dots,n\}} \operatorname{disc}(A_I)\]
where $A_I$ denotes the matrix $A$ restricted to the columns $I$.
\end{definition}
We sketch the main ideas of the JR-algorithm and refer to~\cite{DBLP:conf/innovations/JansenR19} for details.
The algorithm is based on the idea of splitting the solution to an \ac{IP} $\{Ax = b, x\in\mathbb Z_{\ge 0}\}$
into two parts $x' + x'' = x$ where $Ax'$ and $Ax''$ are almost the same. Hence, computing all solutions of
the \ac{IP} with $b' \in \mathsf{H}(b/2)$ we can derive a solution with $b$. Here $\mathsf{H}(b/2)$ is an axis-parallel hypercube
with sufficiently large side length surrounding $b/2$.
The algorithm then iterates this idea.
Indeed, the running time of the algorithm greatly depends on the bound of how evenly a solution can be split,
that is, how large the hypercube needs to be.
For this, discrepancy is a natural measure.
%in fact, the theory of discrepancy originates in the problem of coloring the elements of a ground set with two colors such that a given family of subsets are all colored evenly, i.e. the number of elements of each color is approximately the same. 
%When $A$ is the incidence matrix of this family of sets, $z$ in the definition above gives a coloring and the $\ell_{\infty}$ norm its discrepancy. Discrepancy, however, is also studied for arbitrary matrices. If $A$ is the matrix of a linear program as in our case, this definition corresponds to finding an integral solution that approximates $x = (1/2,\dots,1/2)T$.
A closer inspection of the JR-algorithm shows
that it suffices for their algorithm to take
a hypercube of side length $4\operatorname{herdisc}(A) - 1$.
Thus, the total number of elements in $\mathsf{H}(b/2^i)$ is at most
$(4\operatorname{herdisc}(A))^m$ for all $i$.
The central subprocedure in the algorithm is then to try to combine
any two solutions for $b', b''\in \mathsf{H}(b/2^i)$ to a solution for
$b' + b''\in \mathsf{H}(b/2^{i-1})$.
Instead of the naive algorithm that takes quadratic time (in the
number of elements of $\mathsf{H}(b/2^i)$) it can be implemented more efficiently
as multivariate polynomial multiplication where the input polynomials have $m$ variables and maximum degree $4\operatorname{herdisc}(A) - 1$.
This in turn can be computed efficiently using \ac{FFT}
on inputs of size $\mathcal O((4\operatorname{herdisc}(A))^m)$, see
Appendix~\ref{apx:fft} for details. 

\paragraph*{Evaluating the discrepancy.} The only remaining task now is to bound
the discrepancy of our compressed integer program $\mathsf{IP}_{\mathcal{C}_{\textrm{red}},\hat{\mathcal{C}}}$ presented in
Section~\ref{sec:ilp}. 
Since its columns have small $\ell_1$-norm,
the classical Beck-Fiala theorem allows us to give a very strong bound on its
discrepancy. 
\begin{theorem}[Beck, Fiala~\cite{BECK19811}]
	For every matrix $A\in \mathbb{R}^{m\times n}$ where the $\ell_1$-norm of each column is at most $t$ it holds that $\operatorname{herdisc}(A) < t$.
\end{theorem}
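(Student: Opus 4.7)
The plan is to prove the bound via the classical iterative argument of Beck and Fiala, based on \emph{floating variables} and a dimension count. The hereditary part comes for free: for any column subset $J \subseteq \{1,\ldots,n\}$, the restriction $A_J$ still has every column of $\ell_1$-norm at most $t$, so any proof of $\operatorname{disc}(A) < t$ for arbitrary $A$ transfers. I thus focus on exhibiting a single $z \in \{0,1\}^n$ with $\|A(z - (1/2,\ldots,1/2)^T)\|_\infty < t$. The strategy is to start from $x^{(0)} = (1/2,\ldots,1/2)^T$ and iteratively round one coordinate at a time to $\{0,1\}$ while carefully controlling $Ax$.

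At iteration $k$, call a coordinate \emph{floating} if $x^{(k)}_j \in (0,1)$ and a row \emph{active} if $\sum_{j \in F_k} |A_{ij}| > t$, where $F_k$ is the floating set and $I_k$ the active set. The iterative step picks a nonzero vector in the kernel of the submatrix $A_{I_k, F_k}$ (extended by zeros outside $F_k$) and moves $x$ along it until at least one coordinate hits $\{0,1\}$; this preserves $A_i x$ for every active row and strictly reduces $|F_k|$. The key enabling fact, which I view as the conceptual core, is that $|F_k| > |I_k|$ so that such a kernel vector exists. I would prove this by double counting $\sum_{i \in I_k}\sum_{j \in F_k} |A_{ij}|$: by the definition of $I_k$ this sum exceeds $t|I_k|$, but by the column-$\ell_1$ bound it is at most $t|F_k|$.

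The main obstacle is squeezing out a \emph{strict} inequality $<t$ rather than $\leq t$ in the final error bound. For each row $i$, let $k_i$ be the first iteration at which row $i$ is inactive (since $F$ only shrinks, it stays inactive thereafter). During iterations $k < k_i$ the row is active and $A_i x$ is preserved, so $A_i x^{(k_i)} = A_i x^{(0)}$. From iteration $k_i$ onward, $|A_i(x^{\text{final}} - x^{(k_i)})| \leq \sum_{j \in F_{k_i}} |A_{ij}| \cdot |x^{\text{final}}_j - x^{(k_i)}_j|$, and crucially every $|x^{\text{final}}_j - x^{(k_i)}_j|$ is strictly less than $1$ because $x^{(k_i)}_j$ lies strictly inside $(0,1)$ and $x^{\text{final}}_j \in \{0,1\}$. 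Hence either this sum is $0$ (when $A_{i,j}=0$ for every $j \in F_{k_i}$) or it is strictly less than $\sum_{j \in F_{k_i}} |A_{ij}| \leq t$; in either case the row-$i$ deviation stays below $t$, which yields the claimed bound on $\operatorname{herdisc}(A)$.
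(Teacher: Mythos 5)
Your proof is correct: it is the classical Beck--Fiala floating-variable argument (kernel moves on the submatrix of active rows and floating columns, the counting $t\,\lvert I_k\rvert < t\,\lvert F_k\rvert$ forcing $\lvert F_k\rvert > \lvert I_k\rvert$, and the strict final bound because every coordinate still floating when its row becomes inactive moves by strictly less than $1$, with the hereditary part free since column restrictions preserve the $\ell_1$-bound). The paper does not prove this theorem at all --- it imports it from the cited reference --- and your argument is essentially the standard proof from that source, correctly adapted to the paper's normalization where discrepancy is measured against $(1/2,\ldots,1/2)^T$, so the classical $<2t$ bound for $\pm1$-colorings becomes the stated $<t$.
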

Moreover, Bednarchak and Helm~\cite{DBLP:journals/combinatorica/BednarchakH97} observed 
that for $t\ge 3$ the bound can be improved to $\operatorname{herdisc}(A) \le t - 3/2$.
%\begin{theorem}[Beck, Fiala \cite{BECK19811}]
%	Given a number $d$ and $m$ subsets $S_1,\dots,S_m \subseteq [n]$ such that $|\{j|i\in S_j\}| \leq d$ for all $i \leq n$ one can find a vector $x \in \{-1,1\}^n$ such that $|\sum_{i \in S_j} x_i| \leq 2d-1$ for all $j \leq m$.
%\end{theorem}
%\noindent Best suitable for our case this was slightly improved by Bednarchak and Helm \cite{DBLP:journals/combinatorica/BednarchakH97}:
%\begin{theorem}[Beck, Fiala Improved]
%	Given a number $d \geq 3$ and $m$ subsets $S_1,\dots,S_m \subseteq [n]$ such that $|\{j|i\in S_j\}| \leq d$ for all $i \leq n$ one can find a vector $x \in \{-1,1\}^n$ such that $|\sum_{i \in S_j} x_i| \leq 2d-3$ for all $j \leq m$.
%\end{theorem}
This can be applied directly to our bounds on the $\ell_1$-norm of the
configurations of $\mathsf{IP}_{\mathcal{C}_{\textrm{red}},\hat{\mathcal{C}}}$.
Let $A_{\mathcal{C}_{\textrm{red}},\hat{\mathcal{C}}}$ denote the corresponding
constraint matrix. 
Since for each column $c$ of $A_{\mathcal{C}_{\textrm{red}},\hat{\mathcal{C}}}$, we
have 
$\norm{c}_1 \leq 2\log(1/\varepsilon)$, we get 
\begin{equation*}
  \operatorname{herdisc}(A_{\mathcal{C}_{\textrm{red}},\hat{\mathcal{C}}}) \le \mathcal O(\log(1/\varepsilon)) .
\end{equation*}
Moreover, the number of rows $m$
is one plus the number
of rounded processing times, that is, $O(1/\epsilon \log(1/\epsilon))$.
%and with $m = \mathcal{O}(1/\varepsilon\log(1/\varepsilon))$ and $\norm{c}_{\infty} \leq 2$ the number of columns $n$ can be bounded by
%\[n \leq (2\cdot2+1)^m \leq 2^{\mathcal{O}(1/\varepsilon\log(1/\varepsilon))}.\]
%\[\begin{pmatrix}m\\\norm{c}_1\end{pmatrix} \cdot (2\norm{c}_{\infty})^{\norm{c}_1} \leq m^{\norm{c}_1}\cdot 4^{\norm{c}_1} \leq 2^{\mathcal{O}(\log(1/\varepsilon\log(1/\varepsilon))\log(1/\varepsilon))} \leq 2^{\mathcal{O}(\log^2(1/\varepsilon))}.\]
%Interestingly,  the number of variables is quasi-polynomial. So the hereditary discrepancy is bounded by $H = 2\log(1/\varepsilon)$ and we have $\Delta \leq 1/\varepsilon$ and $\norm{b}_{\infty} \leq N$.
The algorithm needs to perform $\mathcal O(\log(n))$ many
\acp{FFT} on input of size $(\log(1/\epsilon))^{\mathcal O(1/\epsilon \log(1/\epsilon))}$.
Hence, the total running time thus becomes
\begin{equation*}
    2^{\mathcal{O}(1/\varepsilon\log(1/\varepsilon)\log\log(1/\varepsilon))}\log(n) + \mathcal{O}(n) .
\end{equation*}
Here $\mathcal O(n)$ is necessary for the preprocessing.
For a concrete precision $\epsilon$ it makes sense to construct
the integer program and determine exactly the maximum $\ell_1$-norm
of the columns and use this to determine the size of each
hypercube $\mathsf{H}(b/2^i)$.

For example, the integer program we derive
for precision $\epsilon\approx 17.29\%$ using the optimized rounding
scheme (see next section) gives us a bound of $3$ on the $\ell_1$-norm
and therefore a very moderate bound of $5$ on the side length
of each hypercube.

\subsection*{The complete algorithm}
To sum up the description of our algorithm, we present all steps of the algorithm together in
pseudocode in Fig.~\ref{alg}.
Note that this pseudocode is not optimized (in contrast to our implementation).
For example, the computation of the sets $J_{\text{small}}$, $J_{\text{large}}$,
and $J_{\text{huge}}$ can be done in one sweep and the values $b_{i,k}$ and the
rounded processing times $\tilde{p_{j}}$ can also be computed concurrently.
The binary search is performed until lower and upper bound differ by a factor
less than $1 + \epsilon'$. This parameter $\epsilon'$ can be taken negligibly small
(in contrast to~$\epsilon$),
since the running time grows only logarithmically in $\epsilon'$.
\begin{figure}[tb]
\small{
\begin{center}
    \begin{pchstack}[boxed]
    \pseudocode[head={Input: $I=[p_{1},\ldots,p_{n},m], \varepsilon, \varepsilon'$}]{%
      p_{\max} = \max_{j=1,\ldots,n} \{p_j\}\\
      \LB = \max\{p_{\max}, \sum_{j=1}^{n} p_{j} /m \}\pccomment{compute
        bounds}\\
      L = \LB; R= 2\LB\\
      \pcwhile (1+\varepsilon')L < R \pcdo:\pccomment{binary search for $\OPT$
        }\\
      \t T=(R+L)/2 \pccomment{guess makespan}\\
      \t J_{\text{small}} = \{j\in \{1,\ldots,n\} \mid p_j \leq \varepsilon
      T\}\\
      \t J_{\text{large}} = \{1,\ldots,n\}\setminus J_{\text{small}}\\
      \t J_{\text{huge}} = \{j\in J_{\text{large}}\mid p_j \geq (1-2\epsilon)
      T\}\\
      \t J_{\text{large}} = J_{\text{large}}\setminus J_{\text{huge}}\\
      \t \pcfor j\in J_{\text{huge}}:\\
      \t[2] \text{find $j'\in J_{\text{large}}$ with $p_{j'}$ minimal and
        $p_{j}+p_{j'}\leq T$}\\
      \t[2] J_{\text{large}} = J_{\text{large}}\setminus \{j'\}\\
      \t \pcendfor\\
      \t \pcfor i=0,\ldots,\lceil  \log((1-2\varepsilon)/\varepsilon)
      \rceil:\\
      \t[2] \pcfor k=0,\ldots, \lceil  1/\varepsilon -1 \rceil:\\
      \t[3] b_{i,k} = 2^{i}\varepsilon T+k\varepsilon^2 2^i T; n_{i,k} = 0\\
      \t[2] \pcendfor\\
      \t \pcendfor
    }

    \pseudocode[head={\ }]{%
      \t \pcfor j\in J_{\text{large}}:\\
      \t[2] \text{let } i\in \mathbb N \text{ with } b_{i,k}\leq p_j < b_{i,k+1} \\ \t[5] \text{or } b_{i,\lceil1/\varepsilon -1 \rceil}\leq p_j < b_{i+1,0} \\
%      \t[2] \pcif b_{i,k}\leq p_j < b_{i,k+1} \text{ or } b_{i,k=\lceil
%        1/\varepsilon -1 \rceil}\leq p_j < b_{i+1,0}:\\
      \t[2] \tilde{p_j} = b_{i,k}; n_{i,k} = n_{i,k}+1\\
      \t \pcendfor \\
      \t[1] \text{construct 
        $\mathsf{IP}_{\mathcal{C}_{\textrm{red}},\hat{\mathcal{C}}}$ for
        $m-|J_{\text{huge}}|$ machines}\\
      \t \text{solve
        $\mathsf{IP}_{\mathcal{C}_{\textrm{red}},\hat{\mathcal{C}}}$ via the
        JR-algorithm}\\
      \t \pcfor j\in J_{\text{huge}}:\\
      \t[2] \text{let $j'$ be the job paired with $j$ if there is one}\\
      \t[2] \text{assign $j$ and possibly $j'$ to an empty machine}\\
      \t \pcendfor \\
      \t \text{assign $J_{\text{small}}$ greedily} \\
      \t \pcif
      \text{$\mathsf{IP}_{\mathcal{C}_{\textrm{red}},\hat{\mathcal{C}}}$ has no
        solution}:\\
      \t[2] \text{set $L=T$ and \textbf{break}}\\
      \t \pcif \text{the makespan exceeds $(1 + \varepsilon) T$}:\\
      \t[2] \text{set $L=T$ and \textbf{break}}\\
      \t \text{set $R=T$}\\
      \textbf{endwhile}\\
      \pcreturn \text{the schedule produced for $L$}
      }
    \end{pchstack}

  \end{center}}
  \caption{The non-optimized pseudocode of our algorithm. Here, \textbf{break} means that the current \textbf{while}-iteration is aborted and the next iteration is started. }
  \label{alg}
\end{figure}

\subsection*{Optimizing the rounding scheme}
Taking a closer look at the proof of Lemma~\ref{lem:reduced} reveals that we
only used the last property of Lemma~\ref{lem:rounding} to obtain the
simplified integer program.
Now, we want to find the \emph{best} rounding: Either minimize the number of
rounded processing times $d$ for a given $\varepsilon$ or minimize the precision
$\varepsilon$ for a given $d$.
Fortunately, the task to decide whether such a rounding for given $d$ and
$\varepsilon$ exists can be formulated as a mixed integer program. 
This allows us to obtain the best rounding in a generic preprocessing step.
In the following, $x_{i}$ will denote the $i$-th rounded processing time with
$x_{0}\geq x_{1}\geq \cdots \geq x_{d-1}$. 
Without loss of generality, we assume that our current guess $T$ is equal to $1$
here.
As our remaining item sizes are in the interval $(\varepsilon, 1-2\varepsilon)$
and we want to obtain a rounding that only produces an error of $1+\varepsilon$,
we need to guarantee that
\begin{align*}
  (1+\varepsilon)x_{0} &\geq (1-2\varepsilon), \\
  x_{0} &\leq (1-2\varepsilon), \\
  x_{d-1} &\leq \varepsilon (1+\varepsilon).
\end{align*}
Furthermore, we need to make sure that $x_{i}$ and $x_{i+1}$ are relatively
close, i.\,e~$(1+\varepsilon)x_{i+1} \geq x_i$. 

To guarantee the last property of Lemma~\ref{lem:rounding}, we need to ensure
that every configuration with more than $2\log(1/\varepsilon)$ 
processing times can be reduced.
Hence, we construct for every subset $X'\subseteq \{0,\ldots,d-1\}$ of 
size $L+1$ an indicator variable $y_{X'}$ that is
$1$ iff there is a single configuration  containing all of these processing
times, i.\,e.~$\sum_{i\in X'}x_{i}\leq 1$.
We also use another indicator variable $z_{i_{1},i_{2},i}$ that is $1$ iff
$x_{i_{1}}+x_{i_{2}}=x_{i}$.
Now, we need to guarantee that all configurations containing $L+1$ item sizes
can be reduced to $L$ item sizes. 
Hence, if $y_{X'}=1$ this implies that one can get rid of one of the item sizes,
i.\,e.~$z_{i_{1},i_{2},i}=1$ for some $i_{1},i_{2}\in X'$ and $i\in
\{0,\ldots,d-1\}$. 
All of these indicator variables and implications can be easily
introduced via the big $M$ method or can be directly formulated for the mixed
integer program solver.
Finally, the $x_{i}$ are fractional variables and the indicator variables are
integral.
The complete, formal formulation of the mixed integer program
$\mathsf{MIP}_{\varepsilon,d}$ can be found in Section~\ref{sec:milp} in the appendix.
Now, we can perform a binary search to optimize either $\varepsilon$ or $d$.
More formally, for a given $d$, we can find the minimal $\varepsilon(d)$, or for a
given $\varepsilon$, we can find the minimal $d(\varepsilon)$. 
See Table~\ref{tab:rounding} for the precision achievable with certain values of $d$.

\begin{table}
\caption{Optimal rounding scheme for some numbers of rounded processing times $d$.}
\label{tab:rounding}
\medskip

\centering
\begin{tabular}{c c}
  \toprule
  $d$ & optimal precision $\varepsilon(d)$ \\
  \midrule
  $9$ & $0{.}172874755859$\\
$10$ & $0{.}160867004395$\\
$11$ & $0{.}15059387207$\\
\bottomrule
\end{tabular}
\end{table}

% \section{Simplifying the Configuration IP}

\subsection{Non-PTAS algorithms}

Currently, heuristic algorithms like \textsf{LPT}, the \textsf{MULTIFIT} algorithm \cite{DBLP:journals/siamcomp/CoffmanGJ78} and its derivative \textsf{DJMS}~\cite{DBLP:journals/ijoris/KuruvillaP15} give some of the best algorithms to solve instances of $\text{P}||\text{C}_{\max}$ in practice. For the sake of completeness here we describe briefly how they work.

The \textsf{MULTIFIT} algorithm was presented by Coffman et al.
\cite{DBLP:journals/siamcomp/CoffmanGJ78} and its exact bound of 13/11 was
proved by Yue \cite{Yue1990OnTE}.
It is based on iteratively applying the \emph{First-Fit-Decreasing} algorithm,
which, for a given guess on the makespan $T$, sorts the jobs by their processing
times in non-increasing order and the machines in an arbitrary order.
Then, each job $j$ is packed onto the first machine where it fits i.\,e.~where
the load of $T$ is not exceeded. 
To approximate an optimum solution of an instance $I$, the \textsf{MULTIFIT} algorithm takes $I$ and a maximum number of rounds $t$ as the input and starts with computing a lower bound $\ell = \max\{\tfrac1m\sum_j p_j, p_1, p_m+p_{m+1}\}$ (where $p_1 \geq\dots\geq p_n$) and an upper bound $u = \operatorname{\textsf{LPT}}(I)$ to bound the intended makespan. Then \textsf{MULTIFIT} aims to compute the smallest feasible makespan $T^{\star} \in [\ell,u]$, which admits a First-Fit-Decreasing packing of all jobs to $m$ bins, via binary search in at most $t$ rounds. Obviously, that maximum number of rounds $t$ may be dropped if all numbers in the input are integral.
However, it can happen that $t$ rounds are over or even $T^{\star} = u$ does not
admit a feasible First-Fit-Decreasing packing. Then the algorithm returns the solution obtained by \textsf{LPT}.

The \textsf{DJMS} algorithm of Kuruvilla and
Paletta~\cite{DBLP:journals/ijoris/KuruvillaP15} combines the \textsf{LPT} approach with
the \textsf{MULTIFIT} approach by splitting machines and jobs into \emph{active} and
\emph{closed} ones.
At the beginning of the algorithm, all jobs and machines are active.
Now, in each iteration, the \textsf{LPT} algorithm is applied to the active jobs and
active machines to compute an upper bound on the makespan $T_{\text{\textsf{LPT}}}$.
Afterwards, \textsf{MULTIFIT} with the upper bound of $T_{\text{\textsf{LPT}}}$ is applied to the
active jobs and machines.
In the solution produced by \textsf{MULTIFIT}, we search for the least loaded active
machine $i^{\star}$ whose load exceeds the lower bound $\ell$ (as defined for \textsf{MULTIFIT}).
All machines with the same load as $i^{\star}$ and the jobs on them are declared
close.
These steps are repeated until all machines are closed.
In their computational experiments, \textsf{DJMS} gave the best makespan compared with
\textsf{LISTFIT} (an algorithm by Gupta and Ruiz-Torres~\cite{gupta2001listfit}), \textsf{LPT},
and \textsf{MULTIFIT}~\cite{DBLP:journals/ijoris/KuruvillaP15}.

\section{Implementation}

The algorithm was implemented in the C++ programming language. To compute fast Fourier transformations we used the C library \textsf{FFTW3} \cite{DBLP:journals/pieee/FrigoJ05} in version 3.3.8
%http://www.fftw.org/
and we applied \textsf{OpenMP} (https://www.openmp.org/) in version 5.0 for
parallelization.
We plan to upload a cleaned up version of our implementation on github
(https://www.github.com). 
The experiments were computed in the HPC Linux Cluster of Kiel university  using 16 cpu cores and 100GB of memory per instance.

\subsection{Computional results}

In the following we refer to our algorithm as \textsf{\BDJR}. To compare our implementation with the non-\ac{PTAS} algorithms \textsf{LPT}, \textsf{MULTIFIT}, and \textsf{DJMS} we compute solutions to a set of instances first considered by Kedia \cite{kedia1971job} in 1971. Since then it has been used by various authors to investigate the quality of scheduling algorithms on identical machines (cf. \cite{guptaruiztorres,DBLP:journals/ijoris/KuruvillaP15,DBLP:journals/dam/LeeM88}). The instances are grouped into four families E1, E2, E3, E4 (see \Cref{tab:instance_families} for an overview).
\begin{table*}[h]
    \caption{An overview on the instance families}
    \label{tab:instance_families}
    \medskip
    
    \centering
%    \scriptsize
\begin{tabular}{cccc}
\toprule
     & $m$ & $n$ & $U$\\
     \midrule
  E1 & $3, 4, 5$ & $2m, 3m, 5m$ & $[1,20], [20,50]$\\
  \midrule
  \multirow{2}*{E2} & $2, 3$ & $10, 30, 50, 100$ & $[100,800]$\\
  & $4,6,8,10$ & $30,50,100$ & $[100,800]$\\
  \midrule
  E3 & $3,5,8,10$ & \begin{tabular}{c}$3m+1,3m+2$,\\$4m+1,4m+2$,\\$5m+1,5m+2$\end{tabular} & $[1,100], [100,200]$\\
  \midrule
  \multirow{2}*{E4} & $2$ & $10$ & $[1,20],[20,50],[1,100],[50,100],[100,200],[100,800]$\\
     & $3$ & $9$ & $[1,20],[20,50],[1,100],[50,100],[100,200],[100,800]$\\
     \bottomrule
\end{tabular}
\end{table*}
Each family consists of classes of $100$ instances and the instances of each class were generated respecting three common parameters; namely, a number of machines $m$, a number of jobs $n$, and a universe interval $U$, used to uniformly select the integer processing times of the jobs. Overall there are $90$ classes $(m,n,U)$ considered, i.e. $9000$ instances in total. These instances do not exceed a machine number of $m=10$, so we generated a family BIG of additional classes where $m \in \{25,50,75,100\}$, $n = 4m$, and $U=[1,1000]$ (see \cref{tab:BIG}) to give evidence to the fact that \textsf{\BDJR} solves larger instances in reasonable time too.
See \Cref{fig:M3_N9_U1_20_new} for a makespan comparison of \textsf{\BDJR} with the non-\ac{PTAS} algorithms for class $(m=4,n=8,U=[1,20])$ of family E1 where the 100 instances of the class are presented from left to right while the makespan grows from bottom to top.
In \cref{tab:E1,tab:E2,tab:E3,tab:E4,tab:BIG} we give the computation results which are prepared as follows. Note that the results for families E3, E4, and BIG can be found in the appendix in Section~\ref{apx:results}.
Each line summarizes the results for the $100$ instances of class $(m,n,U)$. Column \texttt{better} is the number of instances where the makespan computed by {\BDJR} is lower than the best makespan of the non-\ac{PTAS} algorithms while \texttt{equal} counts the instances where these are equal. Column $\texttt{avg\_quot} = \sum_{i=1}^{100} \textsf{\BDJR}(i) / \min\{\sum_{i=1}^{100} \textsf{LPT}(i), \sum_{i=1}^{100} \textsf{MF}(i), \sum_{i=1}^{100} \textsf{DJMS}(i)\}$ compares the sums of the computed makespans (and is rounded to two decimal places) and column \texttt{avg\_time} states the rounded-up average running time of {\BDJR} in minutes. 
There were no instance classes, where the maximal running time exceeded $2\cdot \texttt{avg\_time}$. 
\begin{table}[h]
\caption{Computational results for the classes of family E1}
\label{tab:E1}
\centering
\scriptsize
\begin{tabular}{rrrrrrcc}
\toprule
family & $m$ & $n$ & $U$ & better & equal & avg\_quot & avg\_time\\
\midrule
E1 & $3$ & $6$ & $[1,20]$ & $0$ & $69$ & $1.02$ & $24$\\
E1 & $3$ & $6$ & $[20,50]$ & $0$ & $22$ & $1.05$ & $44$\\
E1 & $3$ & $9$ & $[1,20]$ & $2$ & $35$ & $1.03$ & $35$\\
E1 & $3$ & $9$ & $[20,50]$ & $3$ & $7$ & $1.05$ & $57$\\
E1 & $3$ & $15$ & $[1,20]$ & $2$ & $42$ & $1.02$ & $42$\\
E1 & $3$ & $15$ & $[20,50]$ & $2$ & $2$ & $1.04$ & $49$\\
E1 & $4$ & $8$ & $[1,20]$ & $0$ & $48$ & $1.03$ & $29$\\
E1 & $4$ & $8$ & $[20,50]$ & $0$ & $11$ & $1.05$ & $49$\\
E1 & $4$ & $12$ & $[1,20]$ & $1$ & $25$ & $1.04$ & $38$\\
E1 & $4$ & $12$ & $[20,50]$ & $0$ & $1$ & $1.06$ & $60$\\
E1 & $4$ & $20$ & $[1,20]$ & $1$ & $23$ & $1.05$ & $46$\\
E1 & $4$ & $20$ & $[20,50]$ & $0$ & $1$ & $1.06$ & $65$\\
E1 & $5$ & $10$ & $[1,20]$ & $0$ & $50$ & $1.03$ & $38$\\
E1 & $5$ & $10$ & $[20,50]$ & $0$ & $5$ & $1.06$ & $60$\\
E1 & $5$ & $15$ & $[1,20]$ & $0$ & $8$ & $1.06$ & $47$\\
E1 & $5$ & $15$ & $[20,50]$ & $0$ & $1$ & $1.07$ & $69$\\
E1 & $5$ & $25$ & $[1,20]$ & $0$ & $7$ & $1.06$ & $51$\\
E1 & $5$ & $25$ & $[20,50]$ & $0$ & $0$ & $1.07$ & $72$\\
\bottomrule
\end{tabular}
\end{table}
\begin{table}[h!]
\caption{Computational results for the classes of family E2}
\label{tab:E2}
\centering
\scriptsize
\begin{tabular}{rrrrrrcc}
\toprule
family & $m$ & $n$ & $U$ & better & equal & avg\_quot & avg\_time\\
\midrule
E2 & $2$ & $10$ & $[100,800]$ & $16$ & $18$ & $1.01$ & $99$\\
E2 & $2$ & $30$ & $[100,800]$ & $0$ & $58$ & $1$ & $1$\\
E2 & $2$ & $50$ & $[100,800]$ & $0$ & $56$ & $1$ & $1$\\
E2 & $2$ & $100$ & $[100,800]$ & $0$ & $66$ & $1$ & $1$\\
E2 & $3$ & $10$ & $[100,800]$ & $10$ & $7$ & $1.03$ & $125$\\
E2 & $3$ & $30$ & $[100,800]$ & $2$ & $29$ & $1$ & $1$\\
E2 & $3$ & $50$ & $[100,800]$ & $0$ & $0$ & $1$ & $1$\\
E2 & $3$ & $100$ & $[100,800]$ & $0$ & $0$ & $1$ & $1$\\
E2 & $4$ & $30$ & $[100,800]$ & $7$ & $0$ & $1.03$ & $130$\\
E2 & $4$ & $50$ & $[100,800]$ & $0$ & $0$ & $1.01$ & $1$\\
E2 & $4$ & $100$ & $[100,800]$ & $0$ & $38$ & $1$ & $1$\\
E2 & $6$ & $30$ & $[100,800]$ & $1$ & $0$ & $1.06$ & $145$\\
E2 & $6$ & $50$ & $[100,800]$ & $6$ & $0$ & $1.02$ & $136$\\
E2 & $6$ & $100$ & $[100,800]$ & $0$ & $0$ & $1$ & $1$\\
E2 & $8$ & $30$ & $[100,800]$ & $0$ & $0$ & $1.08$ & $167$\\
E2 & $8$ & $50$ & $[100,800]$ & $0$ & $0$ & $1.07$ & $162$\\
E2 & $8$ & $100$ & $[100,800]$ & $0$ & $0$ & $1.01$ & $1$\\
E2 & $10$ & $30$ & $[100,800]$ & $0$ & $0$ & $1.08$ & $171$\\
E2 & $10$ & $50$ & $[100,800]$ & $0$ & $0$ & $1.09$ & $164$\\
E2 & $10$ & $100$ & $[100,800]$ & $0$ & $7$ & $1.01$ & $25$\\
\bottomrule
\end{tabular}
\end{table}

For each of the classes where $\texttt{avg\_time} = 1$ the short running time is explained by the rather large quotient $n/m$ causing trivial instances where nearly all jobs are either small ($\leq\varepsilon$) or huge ($\geq 1-2\varepsilon$). 

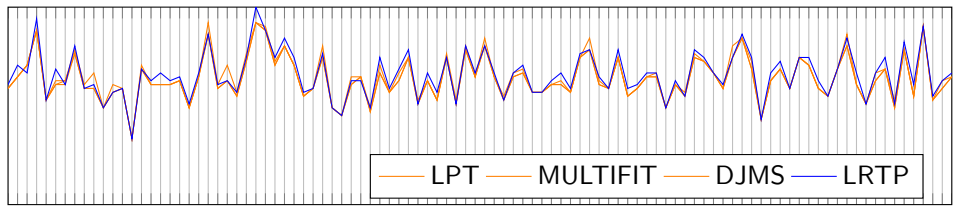
\begin{figure}[h!]
    \centering
    \pgfplotstableread{results/M3_N9_U1_20_new.dat}{\MTHREExNNINExUONExTWENTY}
    \begin{tikzpicture}
        \begin{axis}[
            xmin = 1, xmax = 100,
            ymin = 0, ymax = 1,
            xtick distance = 1,
            %ytick distance = 0.25,
            ytick distance = 1,
            xmajorgrids=true,
            major grid style = {lightgray},
            %minor grid style = {lightgray!25},
            width = \textwidth,
            height = 0.3\textwidth,
            legend cell align = {left},
            legend pos = south east,
            legend columns = 4,
            xticklabels={,,},
            yticklabels={,,}
        ]
        %\addplot[blue, mark = *] table [x = {x}, y = {y1}] {\table};
        %\addplot[red, only marks] table [x ={x}, y = {y2}] {\table};
        %\addplot[teal, only marks, mark = x, mark size = 3pt] table [x = {x}, y = {y3}] {\table};
        \addplot[orange] table [x = {I}, y = {LPT}] {\MTHREExNNINExUONExTWENTY};
        \addplot[orange] table [x ={I}, y = {MF}] {\MTHREExNNINExUONExTWENTY};
        \addplot[orange] table [x = {I}, y = {DJMS}] {\MTHREExNNINExUONExTWENTY};
        \addplot[blue] table [x = {I}, y = {BDJR}] {\MTHREExNNINExUONExTWENTY};
        \legend{\textsf{LPT},\textsf{MULTIFIT},\textsf{DJMS},\textsf{\BDJR}}
        \end{axis}
    \end{tikzpicture}
    \caption{Makespan comparison for 100 instances of class $(3,8,[1,20])$ (x: instances, y: makespan)}
    \label{fig:M3_N9_U1_20_new}
\end{figure}

By their simple nature the non-\ac{PTAS} algorithms
compute solutions even to large instances within a few seconds.
While the running time of the \ac{PTAS}
is generally on a high level, the increase in the running time
as the number of machines and jobs grows is rather small
due to the parameterized running time of our algorithm.
We emphasize that the running time did not cross the border of $5.6$ hours for any instance computed.
We therefore see these experiments as a valid proof of concept for
practically feasible \ac{PTAS}'s.
The empirical solution quality, although not our main focus, is not superior to the
non-\ac{PTAS} algorithms at the considered precision, though the difference is only small.
This may be due to the random instances which do not necessarily
exhibit a worst-case structure and that the difference in theoretical approximation guarantee
is at this state fairly small.
We believe that with additional computational resources or further optimizations an
even lower precision can be reached, which will lead to a superior solution quality also
in experiments.

\bibliography{myref}

\appendix

\section{Omitted proofs}
\LargeAndSmallJobs*
\begin{proof}
  There are two possible situations to consider: 
  If the makespan does not increase, we have
  $\mu(\delta)=\mu(\delta_{\text{large}})$. 
  If the makespan increases, consider the loads $\ell_{1},\ldots,\ell_{m}$ of
  the machines. 
  As the makespan increased, the differences between the loads is bounded by
  $\varepsilon T$, i.\,e.~$|\ell_{i}-\ell_{i'}|\leq \varepsilon T$ for all $i,i'$. 
  We thus have $\mu(\delta)-\ell_{i}\leq \varepsilon T$ for all $i$, as
  $\mu(\delta)=\max_{i}\ell_{i}$. Hence
  \begin{align*}
    \sum_{j}p_{j} = \sum_{i} \ell_{i} \geq \sum_{i} (\mu(\delta)-\varepsilon T)= m\cdot (\mu(\delta)-\varepsilon T)
  \end{align*}
  and thus $(\sum_{j} p_{j})/m \geq \mu(\delta)-\varepsilon T$. As $\OPT(J)\geq
  (\sum_{j} p_{j})/m$, 
  we conclude  $\mu(\delta)\leq \OPT(J)+\varepsilon T$.
  The time complexity of $\mathcal{O}((|J_{\text{small}}|+m)\cdot
  \log(|J_{\text{small}}|+m))$ can be achieved by storing the loads of the
  machines in a min-heap. 
\end{proof}

\begin{lemma}[Formal version of Lemma~\ref{lem:huge_jobs}]
  Define $J_{\text{medium}}:=\{j\in J\mid \varepsilon T < p_{j}\leq 2
  \varepsilon T\}\subseteq J_{\text{large}}$. 
  Let $J_{\text{huge}}=\{j_{1},\ldots,j_{k}\}$ with $p_{j_{1}}\geq p_{j_{2}}\geq
  \ldots p_{j_{k}}$ and $J_{\text{medium}}=\{j'_{1},\ldots,j'_{k'}\}$ with
  $p_{j'_{1}}\geq p_{j'_{2}}\geq \ldots p_{j'_{k'}}$. 
  Define iteratively $\psi\colon \{1,\ldots,k\}\to \{1,\ldots,k',\infty\}$ with
  $\psi(j)=\min\{j'\leq k'\mid p_{j}+p_{j'}\leq 1\land j'\not\in \{\psi(\tilde{j}) \mid \tilde{j} < j\}\}$, where $\min \emptyset=\infty$.
  
  There is an optimal schedule $\delta_{\text{large}}$ of $J_{\text{large}}$
  such that for each machine $i$ with $\delta_{\text{large}}^{-1}(i)\cap
  J_{\text{huge}}=\{j\}$, we have either  $\delta_{\text{large}}^{-1}(i)=\{j\}$
  if $\psi(j)=\infty$ or $\delta_{\text{large}}^{-1}(i)=\{j,\psi(j)\}$ if
  $\psi(j)< \infty$. 
\end{lemma}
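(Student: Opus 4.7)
The plan is to start with any optimal schedule $\delta^{*}$ of $J_{\text{large}}$ and modify it iteratively, one huge job at a time, without ever increasing the makespan, until it agrees with the pairing prescribed by $\psi$. The structural fact that drives everything is that whenever a machine has makespan at most $T=1$ and contains a huge job $j$ (with $p_{j}\geq 1-2\varepsilon$), at most one other job can fit on it, and that partner has processing time at most $2\varepsilon$. Since the jobs under consideration all satisfy $p_{j}>\varepsilon$, such a partner automatically belongs to $J_{\text{medium}}$.

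I process the huge jobs in the order $j_{1},\ldots,j_{k}$ (decreasing processing time) and maintain the invariant that, after step $i-1$, the machine of $j_{s}$ contains exactly $\{j_{s},j'_{\psi(s)}\}$ for every $s<i$ with $\psi(s)<\infty$, and exactly $\{j_{s}\}$ otherwise. At step $i$ set $q^{\star}=\psi(i)$. By the invariant, the medium jobs indexed in $\{\psi(s):s<i,\ \psi(s)<\infty\}$ are pinned to their huge partners, so whenever $q^{\star}<\infty$, the job $j'_{q^{\star}}$ sits on a machine that is not one of those locked pairs. Currently $j_{i}$ is either alone or paired with some $j'_{r}$ whose index $r$ is \emph{not} pinned (otherwise the invariant would force $j_{i}$ onto the machine of an earlier huge job).

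Split by cases. If $q^{\star}=\infty$, any uncommitted partner $j'_{r}$ of $j_{i}$ would satisfy $p_{j_{i}}+p_{j'_{r}}\leq 1$ and thereby qualify as $\psi(i)$, contradicting $q^{\star}=\infty$; hence $j_{i}$ is already alone and nothing is done. If $q^{\star}<\infty$ and $j_{i}$ is alone, I move $j'_{q^{\star}}$ onto $j_{i}$'s machine: the new load there is $p_{j_{i}}+p_{j'_{q^{\star}}}\leq 1$ by definition of $\psi$, and the former machine of $j'_{q^{\star}}$ only gets lighter. If $q^{\star}<\infty$ and $j_{i}$ is paired with some $j'_{r}$, then minimality in the definition of $\psi$ forces $r>q^{\star}$, so $p_{j'_{q^{\star}}}\geq p_{j'_{r}}$; swapping $j'_{r}$ with $j'_{q^{\star}}$ keeps $j_{i}$'s load at $p_{j_{i}}+p_{j'_{q^{\star}}}\leq 1$, and the load change on $j'_{q^{\star}}$'s former machine is $p_{j'_{r}}-p_{j'_{q^{\star}}}\leq 0$. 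In every subcase the modification touches only two machines and does not disturb the machines of $j_{1},\ldots,j_{i-1}$, so the invariant passes to step $i+1$ and the schedule remains optimal.

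The only genuine obstacle is ensuring that $j'_{q^{\star}}$ is actually free to be relocated, rather than being locked in some earlier pair. This is precisely what the greedy definition of $\psi$ and the decreasing-order processing give us: $q^{\star}$ is drawn from $\{1,\ldots,k'\}\setminus\{\psi(\tilde{j}):\tilde{j}<i\}$, and the invariant guarantees that the medium jobs sitting on the already-fixed huge machines are exactly the excluded indices, so $j'_{q^{\star}}$ lies elsewhere. Iterating this modification for $i=1,\ldots,k$ yields the schedule described in the statement.
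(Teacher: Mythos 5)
Your proof is correct and follows essentially the same exchange argument as the paper: process the huge jobs in decreasing order of processing time and swap the current medium partner of $j_i$ for $j'_{\psi(i)}$, using the minimality in the definition of $\psi$ to guarantee that the displaced medium job is no larger, so no machine's load ever exceeds $T$. If anything, your case analysis is slightly more complete than the paper's, since you also explicitly treat the case where $j_i$ currently sits alone even though $\psi(i)<\infty$.
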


\begin{proof}
  Intuitively, $\psi$ maps
  $j$ to the largest job $j'$ in $J_{\text{medium}}$ with which it can be put onto a
  machine and which is not already mapped to another huge job. If no such job $j'$
  exists, $\psi(j)=\infty$. 

  Consider any optimal schedule $\delta_{\text{large}}$ of $J_{\text{large}}$
  such that there is a machine $i$ with $\delta_{\text{large}}^{-1}(i)\cap
  J_{\text{huge}}=\{j\}$ with $\delta_{\text{large}}^{-1}(i)=\{j,j'\}$ where
  either $\psi(j)=\infty$ or $j'\neq \psi(j)$. 
  If there are multiple such machines, we consider the one with the job $j=j_{\ell}$ with
  minimal index in the ordering $j_{1},\ldots,j_{k}$.
  As $j_{\ell}$ has minimal index, if $\psi(j)=\infty$, all fitting jobs from
  $J_{\text{medium}}$ are packed with the jobs $j_{1},\ldots,j_{\ell-1}$. Hence,
  there is no job $j'$ that can be packed with $j_{\ell}$. This situation is
  thus not possible.
  If $j'\neq \psi(j)$, consider the machine
  $\delta_{\text{large}}(\psi(j)):=i'$. As $p_{\psi(j)}+p_{j}\leq 1$ and
  $p_{\psi(j)}\geq p_{j'}$ (by definition of $\psi(j)$), we can exchange $j'$
  and $\psi(j)$ and still keep a feasible optimal schedule.

  Applying the above reason iteratively finally gives us an optimal schedule
  that always pairs the jobs $j$ and $\psi(j)$ (if $\psi(j) < \infty$). 
\end{proof}

\begin{lemma}[Formal version of Lemma~\ref{lem:rounding}]
  \begin{enumerate}
  \item The number of different rounded processing times of $J_{\text{rem}}$ is
    at most $\lceil  1/\epsilon \rceil\cdot
    [\log((1-2\epsilon)/\epsilon)+1]$. 
  \item A schedule $\tilde{\sigma}$ of the rounded processing times implies a
    schedule $\sigma$ of the original processing time with $\mu(\sigma)\leq
    (1+\varepsilon)\mu(\tilde{\sigma})$. 
  \item We have $b_{i,k_{1}}+b_{i,k_{2}}=b_{i+1,(k_{1}+k_{2})/2}$ for all $i$ and
    all $k_{1},k_{2}$ with $k_{1}\bmod 2 = k_{2}\bmod 2$. 
  \end{enumerate}
\end{lemma}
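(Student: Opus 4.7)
The proof plan is to handle the three parts independently, since each is essentially a separate calculation.

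For part (1), I would count the indices $i$ and $k$ that can actually occur. Every $p_j\in J_{\text{rem}}$ lies in $(\varepsilon T,(1-2\varepsilon)T)$, and the growing intervals $I_i=(2^i\varepsilon T,2^{i+1}\varepsilon T]$ cover this range exactly when $2^i\varepsilon T\leq (1-2\varepsilon)T$, that is, $i\leq \log((1-2\varepsilon)/\varepsilon)$. So at most $\lfloor\log((1-2\varepsilon)/\varepsilon)\rfloor+1 \leq \log((1-2\varepsilon)/\varepsilon)+1$ values of $i$ are relevant. For each such $i$, the boundaries $b_{i,k}$ range over $k\in\{0,\dots,\lceil 1/\varepsilon-1\rceil\}$, contributing at most $\lceil 1/\varepsilon\rceil$ distinct rounded values. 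Multiplying gives the stated bound.

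For part (2), I would bound the per-job rounding error. For any $j\in J_{\text{rem}}$ assigned rounded value $\tilde p_j=b_{i,k}$, the width of any subinterval $I_{i,k}$ is $\varepsilon^2 2^i T$ (including the last one whose right endpoint is $b_{i+1,0}=2^{i+1}\varepsilon T$, which I would verify using $\lceil 1/\varepsilon-1\rceil\varepsilon\geq 1-\varepsilon$, so the remaining width is at most $\varepsilon^2 2^iT$). Hence $p_j\leq\tilde p_j+\varepsilon^2 2^iT$. Since $\tilde p_j\geq 2^i\varepsilon T$, we get $p_j\leq \tilde p_j(1+\varepsilon)$. Summing over any machine in $\tilde\sigma$ gives $\ell_{\sigma,i}\leq (1+\varepsilon)\ell_{\tilde\sigma,i}$, and taking the maximum yields $\mu(\sigma)\leq (1+\varepsilon)\mu(\tilde\sigma)$.

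For part (3), I would just compute. Using $b_{i,k}=2^i\varepsilon T+k\varepsilon^2 2^iT$,
\begin{align*}
b_{i,k_1}+b_{i,k_2} &= 2\cdot 2^i\varepsilon T+(k_1+k_2)\varepsilon^2 2^iT\\
&= 2^{i+1}\varepsilon T+\tfrac{k_1+k_2}{2}\varepsilon^2 2^{i+1}T = b_{i+1,(k_1+k_2)/2}.
\end{align*}
When $k_1\equiv k_2\pmod 2$, the index $(k_1+k_2)/2$ is an integer, and since $k_1,k_2\leq \lceil 1/\varepsilon-1\rceil$, so is the average, meaning this index is in the valid range.

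The only subtle point is the treatment of the last subinterval of each $I_i$ in part (2), which has right endpoint $b_{i+1,0}$ rather than $b_{i,\lceil 1/\varepsilon\rceil}$; I would handle this by an explicit computation showing its width is still at most $\varepsilon^2 2^iT$. The rest is purely arithmetic.
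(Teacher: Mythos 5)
Your proposal is correct and follows essentially the same route as the paper's proof: counting the relevant indices $i$ and $k$ for part (1), bounding the per-job rounding error by observing that each subinterval has width $\varepsilon^2 2^i T$ while $\tilde p_j \geq 2^i\varepsilon T$ for part (2), and the direct parity computation for part (3). Your explicit treatment of the truncated last subinterval of each $I_i$ is a welcome bit of extra care (the paper glosses over it), but it does not change the argument.
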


\begin{proof}\
  
     \begin{enumerate}
  \item We have $2^{i+1}\epsilon T = (1-2\epsilon)T$ iff $i+1=\log((1-2\epsilon)/\epsilon)$.
    For $i > \log((1-2\epsilon)/\epsilon)$, there is no job in any interval $I_{i}$.
    Hence, there are at most $\log((1-2\epsilon)/\epsilon)+1$ many
    values for $i$ such that the processing times lie in $I_{i}$. 
    Every $I_{i}$ is split into $\lceil  1/\epsilon \rceil+1$ intervals. Hence, there are at most $[\lceil  1/\epsilon \rceil+1]\cdot [\log((1-2\epsilon)/\epsilon)+1]$ many
    rounded processing times. 
  \item It is sufficient to show $p_{j}\geq \tilde{p}_{j}\geq
    (1+\epsilon)^{-1}p_{j}$. As we only round down, we have $p_{j}\geq
    \tilde{p}_{j}$ immediately. 
    Let  $\tilde{p}_{j}=b_{i,k}$ and hence $p_{j}\in I_{i,k}$.
    Then
    \begin{align*}
      &(1+\epsilon)\tilde{p}_{j}= (1+\epsilon)b_{i,k}=(1+\epsilon)[2^{i}\epsilon T+k\epsilon^{2}2^{i}T]=\\
      &2^{i}\epsilon T+k\epsilon^{2}2^{i}T+2^{i}\epsilon^{2}T+k\epsilon^{3}2^{i}T \geq\\
      &2^{i}\epsilon T+k\epsilon^{2}2^{i}T+2^{i}\epsilon^{2}T=2^{i}\epsilon T+(k+1)\epsilon^{2}2^{i}T = b_{i,k+1} > p_{j}. 
    \end{align*}
  \item We have 
    \begin{align*}
      b_{i,k_{1}}+b_{i,k_{2}}=2^{i}\epsilon T+k_{1}\epsilon^{2}2^{i}T+2^{i}\epsilon T+k_{2}\epsilon^{2}2^{i}T =
      2^{i+1}\epsilon T+(k_{1}+k_{2})\epsilon^{2}2^{i}T.
    \end{align*}
   As $k_{1}\bmod 2 = k_{2}\bmod 2$, we have $(k_{1}+k_{2})/2\in \mathbb{Z}_{\geq
      0}$ and thus 
    \begin{align*}
      2^{i+1}\epsilon T+(k_{1}+k_{2})\epsilon^{2}2^{i}T=
      2^{i+1}\epsilon T+[(k_{1}+k_{2})/2]\epsilon^{2}2^{i+1}T=b_{i+1,(k_{1}+k_{2})/2}.
    \end{align*}
  \end{enumerate}
  
\end{proof}

\paragraph*{Proving Lemma~\ref{lem:reduced}.}
To prove Lemma~\ref{lem:reduced}, we first introduce some notations. 
For $i^{*},k^{*}_{1},k^{*}_{2}$ with $k^{*}_{1}\bmod 2 = k^{*}_{2}\bmod 2$ and,
we define the new column $\hat{c}:=\hat{c}(i^{*},k^{*}_{1},k^{*}_{2})\in \hat{\mathcal{C}}$
with (i) $\hat{c}[i,k] = 2$, if $i=i^{*}$ and $k=k^{*}_{1}=k^{*}_{2}$ or (ii)
$\hat{c}[i,k] = 1$, if  $i=i^{*}$ and $k\in \{k^{*}_{1},k^{*}_{2}\}$ and
$k^{*}_{1}\neq k^{*}_{2}$ or (iii) $\hat{c}[i,k]=-1$, if  $i=i^{*}+1$ und
$k=(k^{*}_{1}+k^{*}_{2})/2$. In all other cases, we define $\hat{c}[i,k]=0$.

For a subset  $\mathcal{C}'\subseteq \mathcal{C}$, let 
$\mathsf{IP}_{\mathcal{C}',\hat{\mathcal{C}}}$ be the following integer program: 
\begin{align*}
  &\sum_{c\in \mathcal{C}'}x_{c} \leq m;\quad 
  \sum_{c\in \mathcal{C}'\cup \hat{\mathcal{C}}}x_{c}\cdot c[i,k] \geq n_{i,k} \ \ \forall (i,k); \quad
  x_{c}\in \mathbb{Z}_{\geq 0} \ \  \forall c\in \mathcal{C}'\cup \hat{\mathcal{C}}
\end{align*}

Let $\mathsf{IP}_{\mathcal{C}}$ be the original configuration \ac{IP}.
We can now obtain the following switching lemma, which directly implies
Lemma~\ref{lem:reduced}, as a feasible solution $x^{*}$ of
$\mathsf{IP}_{\mathcal{C}}$ can be transformed into a feasible solution $y^{*}$
of $\mathsf{IP}_{\mathcal{C}_{\textrm{red}},\hat{\mathcal{C}}}$ and vice versa. 

\begin{lemma}
  Let $x^{*}$ be a feasible solution of 
  $\mathsf{IP}_{\mathcal{C},\hat{\mathcal{C}}}$. 
  \begin{enumerate}
  \item If there is a $c_{1}\in \mathcal{C}$ with $\vert\vert
    c_1\vert\vert_1 > 2[\log((1-2\epsilon)/\epsilon)+1]$ and $x^{*}_{c_{1}}>0$, there are
    configurations 
    $c_{2}\in \mathcal{C}$ with $\vert\vert c_{2}\vert\vert_{1} < \vert\vert
    c_{1}\vert\vert_{1}$ and $c_{3}\in \hat{\mathcal{C}}$, such that the
    vector $y$ is a feasible solution of 
    $\mathsf{IP}_{\mathcal{C},\hat{ \mathcal{C}}}$ with  $y_{c_{1}}=x^{*}_{c_{1}}-1$,
    $y_{c_{2}}=x^{*}_{c_{2}}+1$, $y_{c_{3}}=1$ and
    $y_{c}=x^{*}_{c}$ for all other $c$. 
      \item If there is a configuration $c_{1}=\hat{c}(i^{*},k^{*}_{1},k^{*}_{2})\in
    \hat{\mathcal{C}}$ with $x^{*}_{c_{1}}>0$ and a configuration $c_{2}\in
    \mathcal{C}$ with $c_{2}[i^{*}+1,(k^{*}_{1}+k^{*}_{2})/2]=1$
    and $x^{*}_{c_{2}}> 0$, there is a configuration $c_{3}\in
    \mathcal{C}$, such that the vector  $y$  is a feasible solution of 
    $\mathsf{IP}_{\mathcal{C},\hat{\mathcal{C}}}$ with 
    $y_{c_{1}}=x^{*}_{c_{1}}-1$, $y_{c_{2}}=x^{*}_{c_{2}}-1$,
    $y_{c_{3}}=x^{*}_{c_{3}}+1$ and $y_{c}=x^{*}_{c}$ for all other 
    $c$. 
      \end{enumerate}
    \end{lemma}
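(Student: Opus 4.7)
The plan is to establish the two claims separately via explicit column manipulations chosen to cancel one another out on every row of the constraint matrix. The guiding intuition is that a column $\hat{c}(i^{*},k^{*}_{1},k^{*}_{2})\in\hat{\mathcal{C}}$ encodes the reversible bookkeeping operation of \emph{merging} two jobs of rounded sizes $b_{i^{*},k^{*}_{1}}$ and $b_{i^{*},k^{*}_{2}}$ into a single job of size $b_{i^{*}+1,(k^{*}_{1}+k^{*}_{2})/2}$; part~1 performs such a merge inside $c_{1}$ to shrink $\|c_{1}\|_{1}$, while part~2 undoes it inside $c_{2}$.

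For the first part I begin with a pigeonhole argument. By Lemma~\ref{lem:rounding} there are at most $\log((1-2\epsilon)/\epsilon)+1$ admissible values of $i$, and for each $i$ the index $k$ has only two parities, yielding $2[\log((1-2\epsilon)/\epsilon)+1]$ equivalence classes of pairs $(i,k\bmod 2)$. The assumption $\|c_{1}\|_{1}>2[\log((1-2\epsilon)/\epsilon)+1]$ forces at least two counted units of $c_{1}$ into the same class, producing $i^{*},k^{*}_{1},k^{*}_{2}$ with $k^{*}_{1}\equiv k^{*}_{2}\pmod 2$ and $c_{1}[i^{*},k^{*}_{1}],c_{1}[i^{*},k^{*}_{2}]\geq 1$ (with multiplicity $\geq 2$ in the coincident case $k^{*}_{1}=k^{*}_{2}$). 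I then set $c_{3}:=\hat{c}(i^{*},k^{*}_{1},k^{*}_{2})$ and define $c_{2}$ to be $c_{1}$ with one copy removed at each of $(i^{*},k^{*}_{1})$ and $(i^{*},k^{*}_{2})$ (two copies at $(i^{*},k^{*}_{1})$ if $k^{*}_{1}=k^{*}_{2}$) and one copy of $b_{i^{*}+1,(k^{*}_{1}+k^{*}_{2})/2}$ added. Lemma~\ref{lem:rounding}(3) makes this merge load-preserving, so $c_{2}\in\mathcal{C}$; moreover $\|c_{2}\|_{1}=\|c_{1}\|_{1}-1$. In each row $(i,k)$ the net change $-c_{1}[i,k]+c_{2}[i,k]+c_{3}[i,k]$ is identically zero by the very definition of $\hat{c}$, so the covering constraints are preserved; the machine-count constraint is unchanged because the $-1/+1$ swap happens entirely within $\mathcal{C}$ while $c_{3}\in\hat{\mathcal{C}}$ does not contribute to it. Non-negativity is immediate from $x^{*}_{c_{1}}\geq 1$.

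The second part is the mirror construction. Given $c_{1}=\hat{c}(i^{*},k^{*}_{1},k^{*}_{2})$ and $c_{2}\in\mathcal{C}$ with $c_{2}[i^{*}+1,(k^{*}_{1}+k^{*}_{2})/2]=1$, I split the large job of $c_{2}$ back into two smaller ones, defining $c_{3}$ as $c_{2}$ with the entry at $(i^{*}+1,(k^{*}_{1}+k^{*}_{2})/2)$ decremented by~$1$ and the entries at $(i^{*},k^{*}_{1})$ and $(i^{*},k^{*}_{2})$ incremented by~$1$ each (or a single coordinate incremented by~$2$ if $k^{*}_{1}=k^{*}_{2}$). The decremented coordinate drops to~$0$, so $c_{3}\geq 0$, and Lemma~\ref{lem:rounding}(3) again gives load equality, so $c_{3}\in\mathcal{C}$. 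The row-wise change of $y-x^{*}$ is $-c_{1}[i,k]-c_{2}[i,k]+c_{3}[i,k]$, which cancels on every coordinate by direct inspection in both sub-cases; the machine-count constraint gains $-1+1=0$ on $\mathcal{C}$, and non-negativity follows from $x^{*}_{c_{1}},x^{*}_{c_{2}}\geq 1$.

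The subtlest point, and the one most inviting a book-keeping slip, is the coincident case $k^{*}_{1}=k^{*}_{2}$: there $\hat{c}$ carries the coefficient~$2$ at $(i^{*},k^{*}_{1})$ (and $-1$ at $(i^{*}+1,k^{*}_{1})$) instead of a pair of $\pm 1$ entries, and one must verify that this doubled contribution matches the net change from $c_{1}\to c_{2}$ (respectively $c_{2}\to c_{3}$) exactly. Apart from this case analysis, everything balances at the coordinate level by construction, and no further ingredient beyond Lemma~\ref{lem:rounding}(3) is needed.
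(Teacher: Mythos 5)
Your proof is correct and follows essentially the same route as the paper: a pigeonhole argument over the $2[\log((1-2\epsilon)/\epsilon)+1]$ classes $(i,k\bmod 2)$, then the merge $c_{2}=c_{1}-\hat{c}(i^{*},k^{*}_{1},k^{*}_{2})$ in part~1 and the reverse split in part~2, with row-wise cancellation of the covering constraints and an unchanged machine-count constraint. The only divergence is in part~2, where the paper writes $c_{3}=c_{2}-c_{1}$; as written this could violate non-negativity and would not make the row sums cancel, so your choice $c_{3}=c_{2}+c_{1}$ (adding the $+1,+1,-1$ pattern of $\hat{c}$ to $c_{2}$) is the version that actually carries out the paper's intended row-by-row accounting.
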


\begin{proof}
Let $x^{*}$ be a feasible solution of 
$\mathsf{IP}_{\mathcal{C},\hat{\mathcal{C}}}$.
\begin{enumerate}
\item As $\vert\vert c_1\vert\vert_1 > 2[\log((1-2\epsilon)/\epsilon)+1]$, the pidgeonhole
  principle implies that there are indices 
  $i^{*},k_{1}^{*},k_{2}^{*}$ with $k_{1}^{*}\bmod 2=k_{2}^{*}\bmod 2$, such
  that 
  $\min\{c_{1}[i^{*},k_{1}^{*}],c_{1}[i^{*},k_{2}^{*}]\}\geq 1$, if 
  $k_{1}^{*}\neq k_{2}^{*}$ or $c_{1}[i^{*},k_{1}^{*}]\geq 2$, if 
  $k_{1}^{*}=k_{2}^{*}$. 
  Choose $c_{3}=\hat{c}(i^{*},k_{1}^{*},k_{2}^{*})$ and
  $c_{2}=c_{1}-c_{3}$. 
  By construction, we have $c_{2}\in \mathcal{C}$ and $\lVert c_{2}
  \rVert_{1}=\lVert c_{1} \rVert_{1}+1-2 = \lVert c_{1} \rVert_{1}-1$.
  For $i=i^{*}$ and  $k=k^{*}_{1}=k^{*}_{2}$, we have 
  \begin{align*}
    \sum_{c\in \mathcal{C}'\cup \hat{\mathcal{C}}}y_{c}\cdot c[i,k] = \left[  \sum_{c\in \mathcal{C}'\cup \hat{\mathcal{C}}}x^{*}_{c}\cdot c[i,k]  \right]\underbrace{-2}_{\text{by $c_{1}$}} \underbrace{+2}_{\text{by $c_{3}$}} \geq n_{i,k}.
  \end{align*}
  For $i=i^{*}$ and $k^{*}_{1}\neq k^{*}_{2}$ and $k\in
  \{k^{*}_{1},k^{*}_{2}\}$, we have 
  \begin{align*}
    \sum_{c\in \mathcal{C}'\cup \hat{\mathcal{C}}}y_{c}\cdot c[i,k] = \left[  \sum_{c\in \mathcal{C}'\cup \hat{\mathcal{C}}}x^{*}_{c}\cdot c[i,k]  \right]\underbrace{-1}_{\text{by $c_{1}$}} \underbrace{+1}_{\text{by $c_{3}$}} \geq n_{i,k}.
  \end{align*}
  For $i=i^{*}+1$, and $k=(k^{*}_{1}+k^{*}_{2})/2$, we have 
  \begin{align*}
    \sum_{c\in \mathcal{C}'\cup \hat{\mathcal{C}}}y_{c}\cdot c[i,k] = \left[ \sum_{c\in \mathcal{C}'\cup \hat{\mathcal{C}}}x^{*}_{c}\cdot c[i,k]   \right] \underbrace{+1}_{\text{by $c_{2}$}} \underbrace{-1}_{\text{by $c_{3}$}} \geq n_{i,k}. 
  \end{align*}
  Finally, for all other $i$ and $k$, we have
  \begin{align*}
    \sum_{c\in \mathcal{C}'\cup \hat{\mathcal{C}}}y_{c}\cdot c[i,k]  = \sum_{c\in \mathcal{C}'\cup \hat{\mathcal{C}}}x^{*}_{c}\cdot c[i,k]  \geq n_{i,k},
  \end{align*}
  as nothing changed here.

  As $\sum_{c\in \mathcal{C}}y_{c} = \sum_{c\in \mathcal{C}}x^{*}_{c} \leq m$,
  we can conclude that $y$ is a feasible solution of 
  $\mathsf{IP}_{\mathcal{C},\hat{\mathcal{C}}}$.
  
\item Choose $c_{3}=c_{2}-c_{1}$. By construction, we have $c_{3}\in
  \mathcal{C}$.

    For $i=i^{*}$ and  $k=k^{*}_{1}=k^{*}_{2}$, we have 
  \begin{align*}
    \sum_{c\in \mathcal{C}'\cup \hat{\mathcal{C}}}y_{c}\cdot c[i,k] = \left[  \sum_{c\in \mathcal{C}'\cup \hat{\mathcal{C}}}x^{*}_{c}\cdot c[i,k]  \right]\underbrace{+2}_{\text{by $c_{1}$}} \underbrace{-2}_{\text{by $c_{3}$}} \geq n_{i,k}.
  \end{align*}
  For $i=i^{*}$ and $k^{*}_{1}\neq k^{*}_{2}$ and $k\in
  \{k^{*}_{1},k^{*}_{2}\}$, we have 
  \begin{align*}
    \sum_{c\in \mathcal{C}'\cup \hat{\mathcal{C}}}y_{c}\cdot c[i,k] = \left[  \sum_{c\in \mathcal{C}'\cup \hat{\mathcal{C}}}x^{*}_{c}\cdot c[i,k]  \right]\underbrace{+1}_{\text{by $c_{1}$}} \underbrace{-1}_{\text{by $c_{3}$}} \geq n_{i,k}.
  \end{align*}
  For $i=i^{*}+1$, and $k=(k^{*}_{1}+k^{*}_{2})/2$, we have 
  \begin{align*}
    \sum_{c\in \mathcal{C}'\cup \hat{\mathcal{C}}}y_{c}\cdot c[i,k] = \left[ \sum_{c\in \mathcal{C}'\cup \hat{\mathcal{C}}}x^{*}_{c}\cdot c[i,k]   \right] \underbrace{-1}_{\text{by $c_{2}$}} \underbrace{+1}_{\text{by $c_{3}$}} \geq n_{i,k}. 
  \end{align*}
  Finally, for all other $i$ and $k$, we have
  \begin{align*}
    \sum_{c\in \mathcal{C}'\cup \hat{\mathcal{C}}}y_{c}\cdot c[i,k]  = \sum_{c\in \mathcal{C}'\cup \hat{\mathcal{C}}}x^{*}_{c}\cdot c[i,k]  \geq n_{i,k},
  \end{align*}
  as nothing changed here. 

  As $\sum_{c\in \mathcal{C}}y_{c} = \sum_{c\in \mathcal{C}}x^{*}_{c} \leq m$,
  we can conclude that $y$ is a feasible solution of 
  $\mathsf{IP}_{\mathcal{C},\hat{\mathcal{C}}}$.
\end{enumerate}
\end{proof}

\section{The complete MILP to optimize the rounding scheme}
\label{sec:milp}

The $y$-variables can be easily introduced via the big $M$ method
and by minimizing the $y$ variables. 
\begin{align*}
-\sum_{i\in X'}x_{i}+M\cdot y_{X'}&\leq -1 &\forall X'\subseteq \{0,\ldots,d-1\}\land |X'|=L+1\\
( y_{X'} = 1 &\Leftrightarrow  \sum_{i\in X'}x_{i}\leq 1 )
\end{align*}
We also construct another indicator variable $z_{i_{1},i_{2},i}$ that is $1$ iff
$x_{i_{1}}+x_{i_{2}}=x_{i}$ holds. 
\begin{align*}
   z_{i_1,i_2,i} = 1 &\Leftrightarrow x_{i_1}+x_{i_2} = x_i & \forall i_1,i_2,i\in \{0,\ldots,d-1\}
\end{align*}
Again, this implication can be formulated via the big $M$ method or directly by
a mixed integer program solver.

Now, we need to guarantee that all configurations containing $L+1$ item sizes
can be reduced to $L$ item sizes. 
Hence, if $y_{X'}=1$ this implies that one can get rid of one of the item sizes,
i.\,e.~$z_{i_{1},i_{2},i}=1$ for some $i_{1},i_{2}\in X'$ and $i\in
\{0,\ldots,d-1\}$: 
\begin{align*}
  y_{X'} = 1 &\implies \exists i'_1,i'_2\in X' \exists i\in \{0,\ldots,r-1\} :
 z_{i'_1,i'_2,i} = 1 & \forall X'\subseteq \{0,\ldots,d-1\}\land |X'|=L+1
\end{align*}
Again, this implication can be formulated via the big $M$ method or directly by
an mixed integer program solver.
Finally, the $x_{i}$ are fractional variables and the indicator variables are
integral:
\begin{align*}
  x_i &\in [0,1] & \forall i\in \{0,\ldots,d-1\}\\
  y_{X'} &\in \{0,1\} &\forall X'\subseteq \{0,\ldots,d-1\}\land |X'|=L+1\\
  z_{i_1,i_2,i} &\in \{0,1\} & \forall i_1,i_2,i\in \{0,\ldots,d-1\}  
\end{align*}
In total, we obtain the following mixed integer program:
\begin{align*}
  \min &\sum_{X'}y_{X'}+\sum_{i_{1},i_{2},i}z_{i_{1},i_{2},i}\text{ s.t.}\\
  (1+\varepsilon)x_{0} &\geq (1-2\varepsilon)\\
  x_{0} &\leq (1-2\varepsilon)\\
  x_{d-1} &\leq \varepsilon\cdot (1+\varepsilon)\\
  (1+\varepsilon)x_{i+1} &\geq x_i & \forall i=1,\ldots,d-2\\
  y_{X'} = 1 &\Leftrightarrow  \sum_{i\in X'}x_{i}\leq 1\\
  z_{i_1,i_2,i} = 1 &\Leftrightarrow x_{i_1}+x_{i_2} = x_i & \forall i_1,i_2,i\in \{0,\ldots,d-1\}\\
  y_{X'} = 1 &\Rightarrow \exists i'_1,i'_2\in X' \exists i\in \{1,\ldots,n\} :
  z_{i'_1,i'_2,i} = 1 & \forall X'\subseteq \{0,\ldots,d-1\}\land |X'|=L+1\\
  x_i &\in [0,1] & \forall i\in \{0,\ldots,d-1\}\\
y_{X'} &\in \{0,1\} &\forall X'\subseteq \{0,\ldots,d-1\}\land |X'|=L+1\\
z_{i_1,i_2,i} &\in \{0,1\} & \forall i_1,i_2,i\in \{0,\ldots,d-1\}  
\end{align*}

\section{Computing Multi-Dimensional Convolutions with FFT}
\label{apx:fft}

The JR-algorithm depends on multiplying
multi-variate polynomials. Here we describe how to do this using the well-known \acf{FFT}.
We start by introducing multi-dimensional polynomials as well as multi-dimensional \acfp{DFT}.

\begin{definition}
A \emph{multivariate} or \emph{multi-dimensional} polynomial $p$ of $d$ variables $x_1,\dots,x_d$ and coordinate degree
$n = (n_1,\dots,n_d)$ is a linear combination of monomials, i.e.
\[
p(x) = \sum_{k\leq n-\mathbf{1}} p_k x^k
\qquad \text{where} \quad
x^k := \prod_{j=1}^{d} x_j^{k_j} \quad \forall k \in \{0,\dots,n_1-1\} \times \dots \times \{0,\dots,n_d-1\},
\]
$n-\mathbf{1} = (n_1-1,\dots,n_d-1)$, and $p_k \in \mathbb{C}$ f.a. $k \leq n-\mathbf{1}$.
So $p$ is a (univariate) polynomial of degree $n_j$ in each coordinate direction $j$.
\end{definition}
\begin{definition}
For two multivariate polynomials $f(x) = \sum_{k \leq n-\mathbf{1}} f_k x^k$, $g(x) = \sum_{k \leq n-\mathbf{1}}g_k x^k$ the multi-dimensional discrete convolution $(f * g)(x) = \sum_{k \leq 2n-\mathbf{2}} c_k x^k$ of $f$ and $g$ is defined by
\[
c_k = \sum_{j \leq k} f_j \cdot g_{k-j}, \quad \text{i.e.} \quad
c_{(k_1,\dots,k_d)}
=\sum_{j_1=0}^{k_1} \sum_{j_2=0}^{k_2} \dots \sum_{j_d=0}^{k_d} f_{(j_1,\dots,j_d)} \cdot g_{(k_1-j_1,\dots,k_d-j_d)}.
\]
\end{definition}
Hence, a truly primitive approach to compute $f * g$ takes time $\mathcal{O}(\sum_{k \leq 2n-\mathbf{2}} \prod_{j=1}^d (k_j+1)) \leq \mathcal{O}(2^dN^2)$ where $N = \prod_{j=1}^d n_j$ is the number of all input points.

However, the convolution theorem says that $\mathcal{F}\{f * g\} = N \cdot \mathcal{F}\{f\} \odot \mathcal{F}\{g\}$ where $\mathcal{F}$ denotes the Fourier transform operator and $\odot$ denotes the point-wise multiplication, i.e. $(v \odot w)_k = v_k \cdot w_k$. Therefore, one can compute the convolution $f * g$ by computing \[f * g = \mathcal{F}^{-1}\{N \cdot \mathcal{F}\{f\} \odot \mathcal{F}\{g\}\}.\]
For our goals we only depend on the \emph{discrete} Fourier transformation as follows.
\begin{definition}[\ac{DFT}]\label{dft}
    The multi-dimensional \acf{DFT} $\hat{f} = \mathcal{F}\{f\}$ of $f$ is defined by
    \[\hat{f}_k = \sum_{\ell \leq n-\mathbf{1}} f_{\ell} \cdot \exp\left(-i2\pi\sum_{j=1}^d\frac{k_j\ell_j}{n_j}\right) \quad \forall k \leq n-\mathbf{1}\]
    whereas the \emph{inverse} multi-dimensional \ac{DFT} $\mathcal{F}^{-1}\{\hat{f}\}$ of $\hat{f}$ is given by
    \[f_k = \frac1N\sum_{\ell \leq n-\mathbf{1}} \hat{f}_{\ell} \cdot \exp\left(i2\pi\sum_{j=1}^d\frac{k_j\ell_j}{n_j}\right) \quad \forall k \leq n-\mathbf{1}.\]
\end{definition}
To give more light to these definitions let us consider the $1$-dimensional case, i.e. $d=1, n = n_1$. Then Definition~\ref{dft} simplifies to
\[\hat{f}_k = \sum_{\ell=0}^{n-1}f_{\ell} \cdot e^{-i2\pi k\ell/n}
\quad \text{and} \quad
f_k = \frac1n\sum_{\ell=0}^{n-1}\hat{f}_\ell \cdot e^{i2\pi k\ell/n} \quad \forall k=0,\dots,n-1.\]
Obviously, since $\hat{f}_k = f(e^{-i2\pi k/n})$ for $k=0,\dots,n-1$ one can compute the \ac{DFT} of $f$ in time $\mathcal{O}(n^2)$ by simply evaluating $f$ in $n$ points. Fortunately, this running time can be reduced to $\mathcal{O}(n\log n)$ by algorithmically exploiting the fact, that the evaluation of
\[
    f(x) \; = \; (f_0 x^0 + f_2 x^2 + \dots) + x\cdot(f_1 x^0 + f_3 x^2 + \dots) \; = \; p(x^2) + x\cdot q(x^2)
\]
can always be written by the evaluation of two polynomials $p,q$ of half-sized degrees for any input point $x$. This running time improvement is the reason to call a \ac{DFT} an \ac{FFT}. Also the inverse can be computed efficiently by using just the same trick, since $f_k = \frac1n \hat{f}(e^{i2\pi k/n})$. See Fig. \ref{fig:DFTvsFFT} for an illustration of the whole computation.

\begin{figure}[h]
    \centering
    \begin{tikzpicture}[xscale=0.95]
        \usetikzlibrary{shapes}
        \node at (2.25,3) {\ac{DFT}};
        \node [draw=gray,minimum width=2.5cm] (fg) at (0,3) {$f,g$};
        \node [draw=gray,minimum width=2.5cm] (FfFg) at (0,0) {$\mathcal{F}\{f\},\mathcal{F}\{g\}$};
        \node [draw=gray,minimum width=2.5cm] (FfoFg) at (4.5,0) {$\mathcal{F}\{f\}\odot \mathcal{F}\{g\}$};
        \node [draw=gray,minimum width=2.5cm] (f*g) at (4.5,3) {$f * g$};
        \draw [->] (fg) -- node [right] {$2\cdot\mathcal{O}(n^2)$} (FfFg);
        \draw [->] (FfFg) -- node [above] {$\mathcal{O}(n)$} (FfoFg);
        \draw [->] (FfoFg) -- node [left] {$\mathcal{O}(n^2)$} (f*g);
        
        \node at (9.75,3) {\ac{FFT}};
        \node [draw=gray,minimum width=2.5cm] (fg) at (7.5,3) {$f,g$};
        \node [draw=gray,minimum width=2.5cm] (FfFg) at (7.5,0) {$\mathcal{F}\{f\},\mathcal{F}\{g\}$};
        \node [draw=gray,minimum width=2.5cm] (FfoFg) at (12,0) {$\mathcal{F}\{f\}\odot \mathcal{F}\{g\}$};
        \node [draw=gray,minimum width=2.5cm] (f*g) at (12,3) {$f * g$};
        \draw [->] (fg) -- node [right] {$2\cdot\mathcal{O}(n\log n)$} (FfFg);
        \draw [->] (FfFg) -- node [above] {$\mathcal{O}(n)$} (FfoFg);
        \draw [->] (FfoFg) -- node [left] {$\mathcal{O}(n \log n)$} (f*g);
    \end{tikzpicture}
    \caption{Computing a 1D-convolution with discrete/fast Fourier transformations}
    \label{fig:DFTvsFFT}
\end{figure}
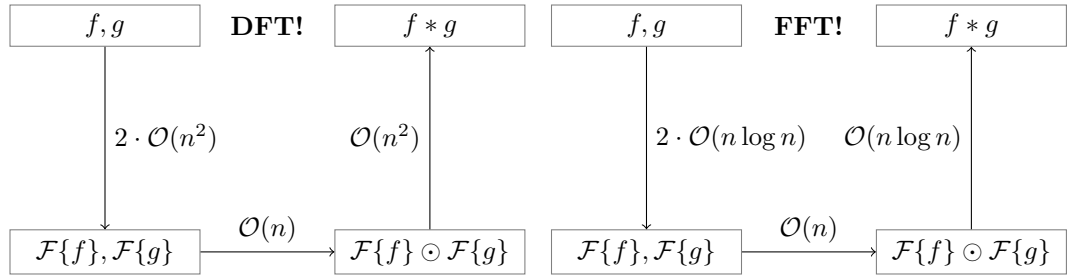

Finally, turning back to the general case $n = (n_1,\dots,n_d)$ we can compute the multi-dimensional \ac{FFT} of $f$ by computing $1$-dimensional \acp{FFT}.
In more detail, we may divide the whole computation into $N/n_j$ $1$-dimensional transformations of size $n_j$ for each coordinate direction $j$. This yields the usual running time of
\[\sum_{j=1}^d \frac{N}{n_j}\mathcal{O}(n_j\log n_j) = \mathcal{O}(N\sum_{j=1}^d\log n_j) = \mathcal{O}(N\log N).\]

\section{Computational Results for E3, E4, and BIG}
\label{apx:results}

\begin{table}[h!]
\caption{Computational results for the classes of family E3}
\label{tab:E3}
\centering
\scriptsize
\begin{tabular}{rrrrrrcc}
\toprule
family & $m$ & $n$ & $U$ & better & equal & avg\_quot & avg\_time\\
\midrule
E3 & $10$ & $31$ & $[100,200]$ & $0$ & $0$ & $1.07$ & $182$\\
E3 & $10$ & $31$ & $[1,100]$ & $0$ & $0$ & $1.08$ & $82$\\
E3 & $10$ & $32$ & $[100,200]$ & $0$ & $2$ & $1.07$ & $179$\\
E3 & $10$ & $32$ & $[1,100]$ & $0$ & $0$ & $1.09$ & $81$\\
E3 & $10$ & $41$ & $[100,200]$ & $0$ & $0$ & $1.07$ & $214$\\
E3 & $10$ & $41$ & $[1,100]$ & $0$ & $0$ & $1.09$ & $80$\\
E3 & $10$ & $42$ & $[100,200]$ & $0$ & $0$ & $1.07$ & $220$\\
E3 & $10$ & $42$ & $[1,100]$ & $0$ & $0$ & $1.1$ & $82$\\
E3 & $10$ & $51$ & $[100,200]$ & $0$ & $0$ & $1.08$ & $225$\\
E3 & $10$ & $51$ & $[1,100]$ & $0$ & $0$ & $1.1$ & $80$\\
E3 & $10$ & $52$ & $[100,200]$ & $0$ & $0$ & $1.06$ & $224$\\
E3 & $10$ & $52$ & $[1,100]$ & $0$ & $0$ & $1.09$ & $75$\\
E3 & $3$ & $10$ & $[100,200]$ & $4$ & $15$ & $1.03$ & $114$\\
E3 & $3$ & $10$ & $[1,100]$ & $3$ & $21$ & $1.03$ & $67$\\
E3 & $3$ & $11$ & $[100,200]$ & $21$ & $2$ & $1.02$ & $150$\\
E3 & $3$ & $11$ & $[1,100]$ & $4$ & $20$ & $1.03$ & $61$\\
E3 & $3$ & $13$ & $[100,200]$ & $20$ & $7$ & $1.03$ & $133$\\
E3 & $3$ & $13$ & $[1,100]$ & $4$ & $15$ & $1.03$ & $60$\\
E3 & $3$ & $14$ & $[100,200]$ & $20$ & $1$ & $1.03$ & $146$\\
E3 & $3$ & $14$ & $[1,100]$ & $6$ & $7$ & $1.03$ & $62$\\
E3 & $3$ & $16$ & $[100,200]$ & $30$ & $2$ & $1.02$ & $166$\\
E3 & $3$ & $16$ & $[1,100]$ & $9$ & $13$ & $1.03$ & $67$\\
E3 & $3$ & $17$ & $[100,200]$ & $7$ & $1$ & $1.03$ & $146$\\
E3 & $3$ & $17$ & $[1,100]$ & $5$ & $11$ & $1.03$ & $59$\\
E3 & $5$ & $16$ & $[100,200]$ & $1$ & $1$ & $1.05$ & $135$\\
E3 & $5$ & $16$ & $[1,100]$ & $0$ & $3$ & $1.06$ & $66$\\
E3 & $5$ & $17$ & $[100,200]$ & $8$ & $2$ & $1.03$ & $149$\\
E3 & $5$ & $17$ & $[1,100]$ & $2$ & $4$ & $1.05$ & $68$\\
E3 & $5$ & $21$ & $[100,200]$ & $19$ & $3$ & $1.02$ & $184$\\
E3 & $5$ & $21$ & $[1,100]$ & $1$ & $0$ & $1.06$ & $65$\\
E3 & $5$ & $22$ & $[100,200]$ & $4$ & $2$ & $1.04$ & $178$\\
E3 & $5$ & $22$ & $[1,100]$ & $1$ & $1$ & $1.06$ & $63$\\
E3 & $5$ & $26$ & $[100,200]$ & $11$ & $1$ & $1.04$ & $185$\\
E3 & $5$ & $26$ & $[1,100]$ & $2$ & $2$ & $1.06$ & $67$\\
E3 & $5$ & $27$ & $[100,200]$ & $1$ & $0$ & $1.05$ & $166$\\
E3 & $5$ & $27$ & $[1,100]$ & $2$ & $1$ & $1.05$ & $67$\\
E3 & $8$ & $25$ & $[100,200]$ & $1$ & $0$ & $1.06$ & $176$\\
E3 & $8$ & $25$ & $[1,100]$ & $0$ & $0$ & $1.07$ & $74$\\
E3 & $8$ & $26$ & $[100,200]$ & $0$ & $0$ & $1.06$ & $177$\\
E3 & $8$ & $26$ & $[1,100]$ & $0$ & $0$ & $1.07$ & $81$\\
E3 & $8$ & $33$ & $[100,200]$ & $0$ & $1$ & $1.06$ & $202$\\
E3 & $8$ & $33$ & $[1,100]$ & $0$ & $0$ & $1.09$ & $72$\\
E3 & $8$ & $34$ & $[100,200]$ & $1$ & $0$ & $1.06$ & $195$\\
E3 & $8$ & $34$ & $[1,100]$ & $0$ & $0$ & $1.08$ & $78$\\
E3 & $8$ & $41$ & $[100,200]$ & $0$ & $0$ & $1.06$ & $207$\\
E3 & $8$ & $41$ & $[1,100]$ & $0$ & $0$ & $1.08$ & $76$\\
E3 & $8$ & $42$ & $[100,200]$ & $0$ & $0$ & $1.06$ & $211$\\
E3 & $8$ & $42$ & $[1,100]$ & $0$ & $0$ & $1.08$ & $77$\\
\bottomrule
\end{tabular}
\end{table}

\begin{table}[h!]
\caption{Computational results for the classes of family E4}
\label{tab:E4}
\centering
\scriptsize
\begin{tabular}{rrrrrrcc}
\toprule
family & $m$ & $n$ & $U$ & better & equal & avg\_quot & avg\_time\\
\midrule
E4 & $2$ & $10$ & $[1,20]$ & $4$ & $61$ & $1.01$ & $39$\\
E4 & $2$ & $10$ & $[1,100]$ & $14$ & $21$ & $1.01$ & $48$\\
E4 & $2$ & $10$ & $[20,50]$ & $10$ & $12$ & $1.02$ & $46$\\
E4 & $2$ & $10$ & $[50,100]$ & $14$ & $9$ & $1.03$ & $59$\\
E4 & $2$ & $10$ & $[100,200]$ & $15$ & $3$ & $1.02$ & $71$\\
E4 & $2$ & $10$ & $[100,800]$ & $14$ & $8$ & $1.01$ & $97$\\
E4 & $3$ & $9$ & $[1,20]$ & $0$ & $43$ & $1.03$ & $37$\\
E4 & $3$ & $9$ & $[1,100]$ & $6$ & $29$ & $1.03$ & $57$\\
E4 & $3$ & $9$ & $[20,50]$ & $2$ & $9$ & $1.05$ & $61$\\
E4 & $3$ & $9$ & $[50,100]$ & $2$ & $0$ & $1.05$ & $75$\\
E4 & $3$ & $9$ & $[100,200]$ & $2$ & $0$ & $1.05$ & $95$\\
E4 & $3$ & $9$ & $[100,800]$ & $5$ & $14$ & $1.03$ & $132$\\
\bottomrule
\end{tabular}
\end{table}

\begin{table}[h!]
\caption{Computational results for the classes of the additional family BIG}
\label{tab:BIG}
\centering
\scriptsize
\begin{tabular}{rrrrrrcc}
\toprule
family & $m$ & $n$ & $U$ & better & equal & avg\_quot & avg\_time\\
\midrule
BIG & $25$ & $100$ & $[1,1000]$ & $0$ & $0$ & $1.13$ & $171$\\
BIG & $50$ & $200$ & $[1,1000]$ & $0$ & $0$ & $1.14$ & $172$\\
BIG & $75$ & $300$ & $[1,1000]$ & $0$ & $0$ & $1.14$ & $179$\\
BIG & $100$ & $400$ & $[1,1000]$ & $0$ & $0$ & $1.15$ & $180$\\
\bottomrule
\end{tabular}
\end{table}

\end{document}